\newtheorem{theorem}{Theorem}
\begin{document}

\title{Incentive-driven transition to high ride-sharing adoption}
	
\author{David-Maximilian Storch}
\affiliation{Chair for Network Dynamics, Institute for Theoretical Physics and Center for Advancing Electronics Dresden (cfaed), Technical University of Dresden, 01062 Dresden, Germany}
\author{Marc Timme}
\affiliation{Chair for Network Dynamics, Institute for Theoretical Physics and Center for Advancing Electronics Dresden (cfaed), Technical University of Dresden, 01062 Dresden, Germany}
\affiliation{Lakeside Labs, Lakeside B04b, 9020 Klagenfurt, Austria}
\author{Malte Schr\"oder}
\affiliation{Chair for Network Dynamics, Institute for Theoretical Physics and Center for Advancing Electronics Dresden (cfaed), Technical University of Dresden, 01062 Dresden, Germany}

\date{\today}

\begin{abstract}
    Ride-sharing -- the combination of multiple trips into one -- may substantially contribute towards sustainable urban mobility. It is most efficient at high demand locations with many similar trip requests. However, here we reveal that people's willingness to share rides does not follow this trend. Modeling the fundamental incentives underlying individual ride-sharing decisions, we find two opposing adoption regimes, one with constant and another one with decreasing adoption as demand increases. In the high demand limit, the transition between these regimes becomes discontinuous, switching abruptly from low to high ride-sharing adoption. Analyzing over 360 million ride requests in New York City and Chicago illustrates that both regimes coexist across the cities, consistent with our model predictions. These results suggest that even a moderate increase in the financial incentives may have a disproportionately large effect on the ride-sharing adoption of individual user groups.
    \end{abstract}

\maketitle

\section*{Introduction}

Sustainable mobility \cite{un2015_sustainbleDevelopmentGoals, Helbing2001,Song2010, Schlaepfer2020, Gonzalez2008, Banister2008} is essential for ensuring socially, economically, and environmentally viable urban life \cite{IPCC2014, ECWhitePaper2011}. Ride-sharing (sometimes also referred to as ride-pooling) constitutes a promising alternative to individual motorized transport by private cars or single-occupant taxi cabs, currently dominating urban mobility \cite{Santi2014}. In ride-sharing, one vehicle transports multiple passengers simultaneously by combining two or more trip requests with similar origin and destination. In contrast to analog on-street hailing of taxi rides, digital app-based ride-hailing services are especially suited to implement ride-sharing due to easy access to the information required to match different trips.

By combining different individual trips into a shared ride, ride-sharing increases the average utilization per vehicle, reduces the total number of vehicles required to serve the same demand \cite{Vazifeh2018} and thereby mitigates congestion and negative environmental impacts of urban mobility \cite{Merlin2019, Jenn2020}. Hence, encouraging ride-sharing for trips that would otherwise be conducted in a single-occupancy motorized vehicle is preferable from a systemic perspective \cite{Anair2020, Jenn2020, Lopez2014, Santi2014b}.

Previous research focused on developing algorithms to implement large-scale ride-sharing \cite{alonso2017demand} as well as the potential efficiency gains derived from aggregating rides \cite{Santi2014, Tachet2017, Molkenthin2019}. Recent analyses suggest that large-scale ride-sharing is most efficient in densely populated urban areas \cite{agatz2012optimization, Vazifeh2018, Tachet2017, Santi2014, Molkenthin2019} since matching individual rides into shared ones without large detours becomes easier with more users, increasing both the economic and environmental efficiency as well as the service quality of the ride-sharing service \cite{Tachet2017, Herminghaus2019, Molkenthin2019}. 
Yet, if and under which conditions people are actually willing to adopt ride-sharing remains elusive \cite{Sarriera2017, margolin1978incentives, Schwieterman2018, Pratt2019, LO2018, Ruijter2020,Morris2019,Lippke2020}. In particular, it is unclear how to encourage an ever growing number of ride-hailing users to choose shared rides over their current individual mobility options \cite{Erhardt2019, Creutzig2018, TLC2018}.

In this article, we disentangle the complex incentive structure that governs ride-hailing users' decisions to share their rides -- or not. In a game theoretic model of a one-to-many demand constellation we illustrate how the interactions between individual ride-hailing users give rise to two qualitatively different regimes of ride-sharing adoption: one low-sharing regime where the adoption decreases with increasing demand and one high-sharing regime where the population shares their rides independent of demand. Analyzing ride-sharing decisions from approximately 250 million ride-requests in New York City and 110 million in Chicago suggests that both adoption regimes coexist in these cities, consistent with our theoretical predictions. Our findings indicate that a small increase in financial incentives may disproportionately increase the adoption of ride-sharing for individual user groups from a low to a high-sharing regime.

\section*{Results}

\subsection*{Contrasting ride-sharing adoption}

Currently, only a small fraction of people adopts ride-sharing even in high-demand situations, despite all its positive aspects \cite{McKinsey2019}. For example, among more than 250 million ride-hailing requests served in New York City in 2019 less than 18\% were requests for shared transportation \cite{data_NYC}. Moreover, the city's ride-sharing activity varies strongly across different parts of the city, in particular at locations with a high number of ride-hailing requests (see Fig.~\ref{fig:FIG1}): For instance, in the East Village and Crown Heights North the fraction of shared ride requests is relatively high, while it is low at both John F. Kennedy and LaGuardia airports, locations that would intuitively be especially efficient for sharing rides. Several other location throughout New York City as well as Chicago exhibit similarly contrasting ride-sharing adoption (see Supplementary Notes 1 and 2 for details).
These findings hint at a complex interplay of urban environment, demand structure and socio-economic factors that govern the adoption of ride-sharing. To disentangle these complex interactions, we introduce and analyze a game theoretic model capturing essential features of ride-sharing incentives, disincentives as well as topological demand structure.

\begin{figure}
    \centering
       \includegraphics[]{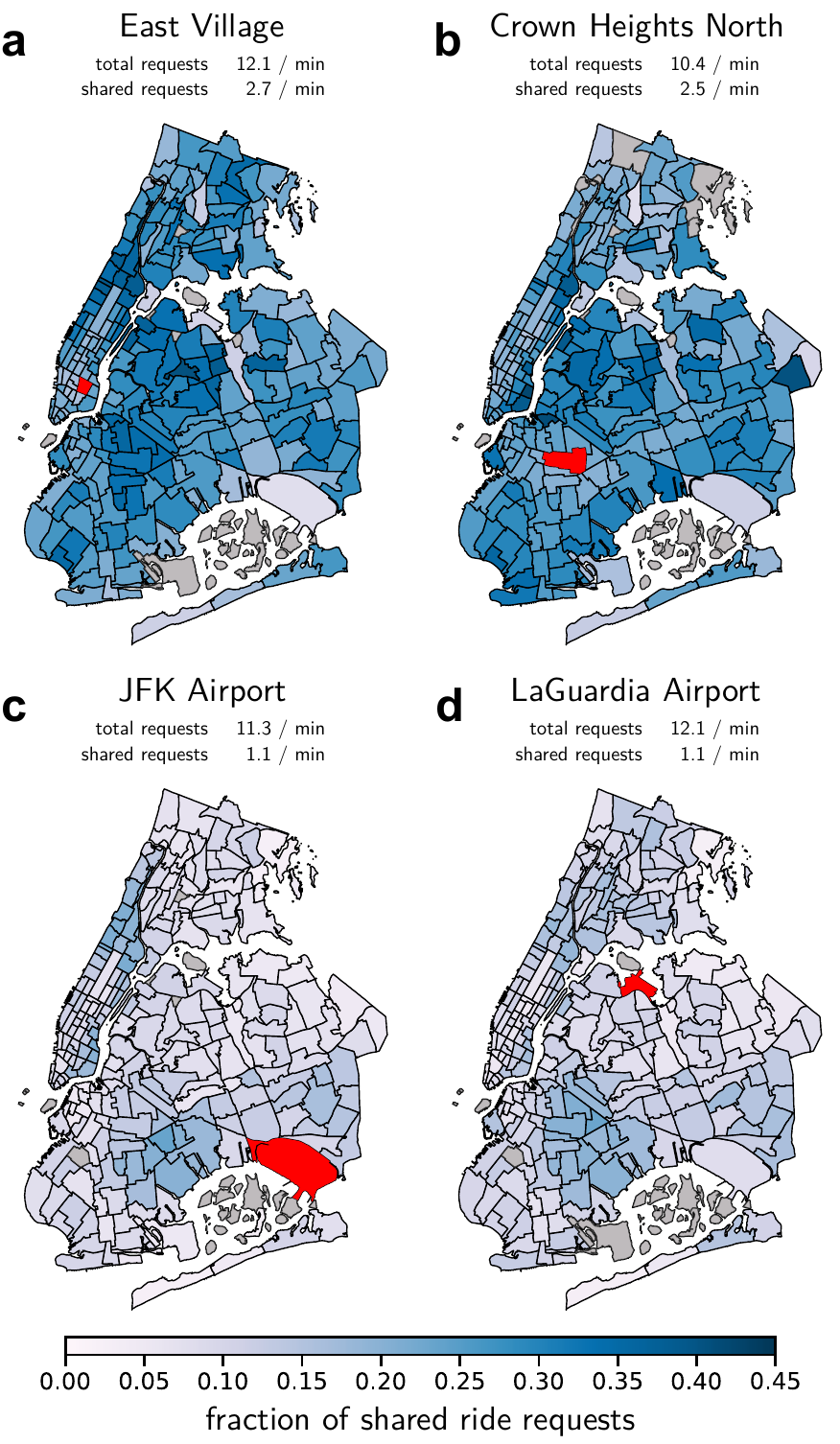}
    \caption{
\textbf{Contrasting ride-sharing adoption despite high request rate in New York City}.
Fraction of shared ride requests from different origins (red) served by the four major for-hire vehicle transportation service providers in New York City by destination zone (January - December 2019) \cite{data_NYC}. 
Gray areas were excluded from the analysis due to insufficient data (see Methods). The fraction of shared ride requests differs significantly by origin and destination with a complex spatial pattern across destinations, even though the average overall request rate is similar for all four origin locations. 
\textbf{a,b} Some areas, such as East Village and Crown Heights North, show a high adoption of ride-sharing services.
\textbf{c,d} Despite a similarly high request rate, other locations, such as JFK and LaGuardia airports, show a significantly lower adoption of ride-sharing services.
}
    \label{fig:FIG1}
\end{figure}

\subsection{Ride-sharing incentives}

\begin{figure}
    \centering
    \includegraphics{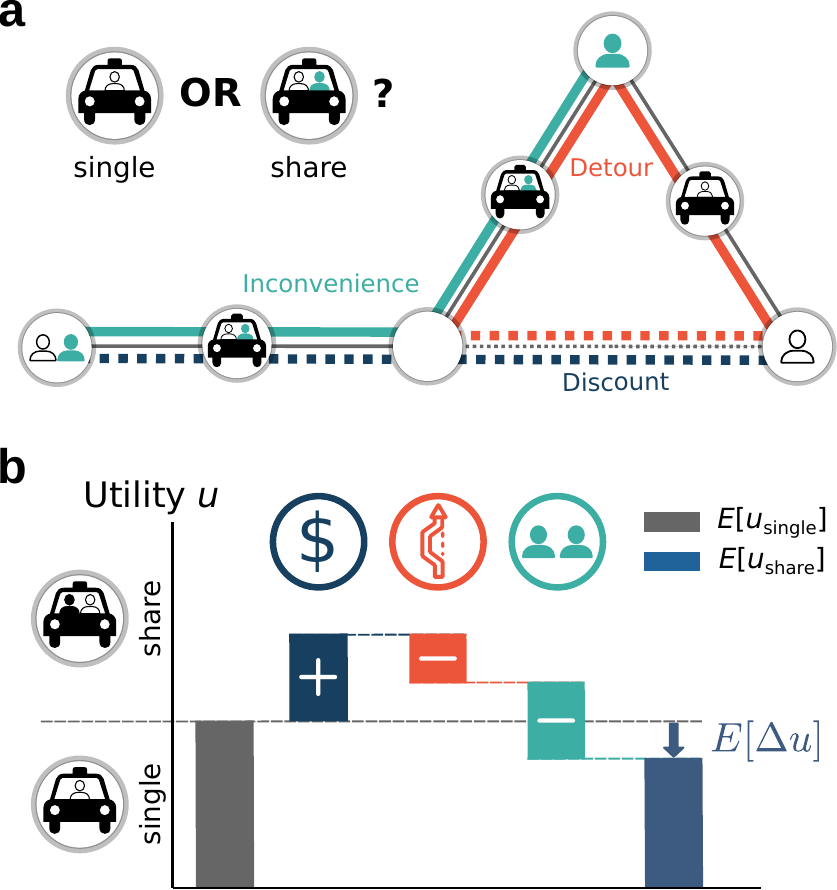}
    \caption{
    \textbf{Trade-offs between incentives determine the decision to share a ride, or not.} 
    \textbf{a} Shared rides offer advantages and disadvantages compared to single rides. 
    On the one hand, they offer financial discounts typically proportional to the distance of a direct single ride (blue, dotted).
    On the other hand, rides shared with strangers may include detours compared to a direct trip to pickup or deliver these other passengers (orange, solid compared to dotted) and may be inconvenient due to other passengers in the car (e.g. loss of privacy or less space,  green). 
    \textbf{b} The decision to book a shared ride depends on the balance of all three factors. If the expected utility difference $E[\Delta u] = E[u_\mathrm{share}] - E[u_\mathrm{single}]$ between a shared and a single ride is positive, the financial discounts overcompensate detour and inconvenience effects; users share. If $E[\Delta u]$ is negative (as illustrated), users prefer to book single rides.
    }
    \label{fig:FIG2}
\end{figure}

The decision of ride-hailing users to request a single or a shared ride reflects the balance of three fundamental incentives \cite{margolin1978incentives, Ruijter2020}: financial discounts, expected detours as well as uncertainty about the duration of the trip, and the inconvenience of sharing a vehicle with strangers. Strong correlations between the adoption of ride-sharing and (in)direct measures of the three incentives (see Supplementary Notes 1 and 2, including Supplementary Figures 5 and 7) confirm the importance of these incentives found in detailed empirical studies of ride-sharing user experiences as well as focus group interviews \cite{Morris2019, Sarriera2017, Lippke2020, LO2018, Schwieterman2018, Pratt2019}. Together, discounts, detours and inconvenience affect the ride-sharing adoption as follows (Fig.~\ref{fig:FIG2}):

\paragraph{Discounts.} Ride-sharing is incentivized by financial discounts granted on the single ride trip fare, partially passing on savings of the service cost to the user. Often, these discounts are offered as percentage discounts on the total fare such that the financial incentives $u_\mathrm{fin}^\mathrm{share} > 0$ are approximately proportional to the distance or duration $d_\mathrm{single}$ of the requested ride, $u^\mathrm{share}_\mathrm{fin} = \epsilon \, d_\mathrm{single}$, where $\epsilon$ denotes the per-distance financial incentives. In many cases, these discounts are also granted if the user cannot actually be matched with another user into a shared ride \cite{Uber_Pool, FreeNow_Match}.

\paragraph{Detours.} Potential detours $d_\mathrm{det}$ to pickup or to deliver other users on the same shared ride 
discourage sharing. The magnitude of this disincentive $u_\mathrm{det}^\mathrm{share} < 0$ increases with the detour $d_\mathrm{det}$.

\paragraph{Inconvenience.} Sharing a ride with another user may be inconvenient due to spending time in a crowded vehicle or due to loss of privacy \cite{margolin1978incentives, Pratt2019,LO2018}. This disincentive $u_\mathrm{inc}^\mathrm{share} < 0$ scales with the distance or duration $d_\mathrm{inc}$ users ride together.\\

In the following we take $u_\mathrm{det}^\mathrm{share} \propto d_\mathrm{det}$ and $u_\mathrm{inc}^\mathrm{share} \propto d_\mathrm{inc}$, describing the first order approximation of these disincentives and matching the linear scaling of the financial incentives with the relevant distance or time. 

These incentives for a shared ride describe the difference $\Delta u$ in utility compared to a single ride or another mode of transport. The overall utility of a shared ride is then given by
\begin{eqnarray}
    u_\mathrm{share} &=&  u_\mathrm{single} + \Delta u \nonumber \\
    &=& u_\mathrm{single} + u_\mathrm{fin}^\mathrm{share} + u_\mathrm{det}^\mathrm{share}  + u_\mathrm{inc}^\mathrm{share} \label{eq:utility} \\
    &=& u_\mathrm{single} + \epsilon \, d_\mathrm{single} - \xi \, d_\mathrm{det} - \zeta \, d_\mathrm{inc} \nonumber 
\end{eqnarray}
where the utility $u_\mathrm{single}$ for a single ride describes the benefit of being transported, as well as the cost and time spent on the ride. 
The factors $\epsilon$, $\xi$ and $\zeta$ denote the user's preferences. By rescaling the utilities (measuring in monetary units), $\epsilon$ directly denotes the relative price difference between single and shared rides whereas $\zeta$ and $\xi$ quantify the importance of inconvenience and detours 
relative to the financial incentives (see Supplementary Note 3 for details).

For a given origin-destination pair with fixed single ride distance $d_\mathrm{single}$, financial incentives are constant for a given discount factor $\epsilon$. In contrast, detour and inconvenience contributions depend on the destinations and sharing decisions of other users. Their magnitude depends on where these users are going and on the route the vehicle is taking for a shared ride (see Methods). The decision to share a ride is determined by the expected utility difference (see Fig.~\ref{fig:FIG2})
\begin{equation}
    E[\Delta u] = E[u_\mathrm{share}] - E[u_\mathrm{single}] \label{eq:utility_diff}
\end{equation}
where $E[\cdot]$ signifies the expectation value over realizations of other users' destinations and sharing decisions conditional on one's own sharing decision.

\subsection{Ride-sharing coordination game on networks}

\begin{figure}
    \centering
    \includegraphics[width=\linewidth]{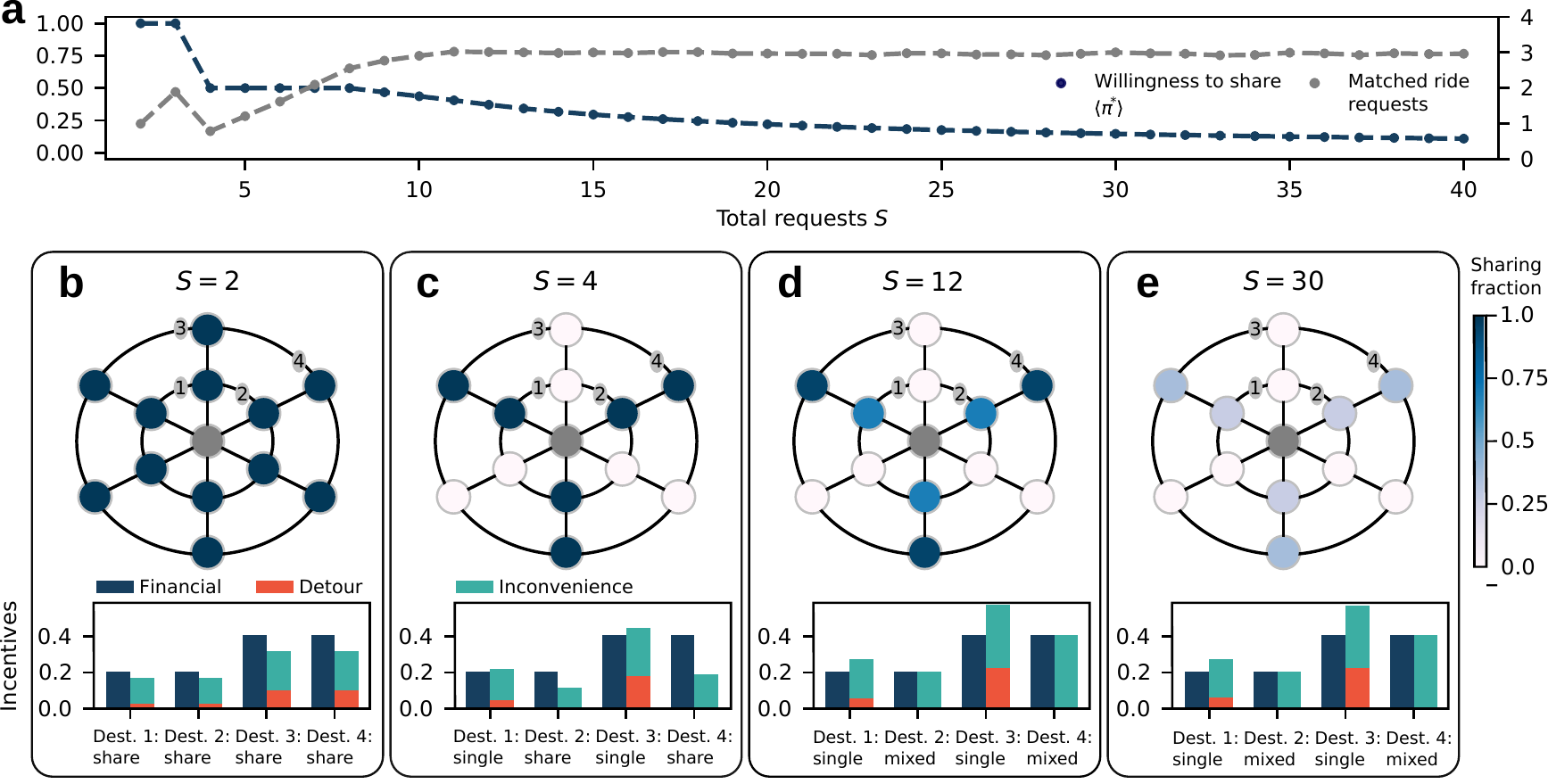}
        \caption{
    \textbf{Adoption of ride-sharing decreases with request rate.} In a stylized city topology (panels \textbf{b}-\textbf{e}) users request transportation from a single origin (gray) to destinations in the city periphery homogeneously (results are robust for alternative settings, see Supplementary Note 4). \textbf{a} The global equilibrium adoption of ride-sharing decreases as the number of users increases (blue) while the number of actually shared rides becomes constant (gray). 
    The kink for $S=3$ is an artefact related to the small and odd number of requests and matching of exactly two requests per vehicle such that one request can never be paired (see Supplementary Note 3 for details). \textbf{b-e} 
    As the number of users increases, ride-sharing adoption decreases and a sharing/non-sharing pattern emerges around the origin (top), resulting from the equilibrium incentive balance (bottom, illustrated for the numbered destinations) and possible matching constellations. Requests for shared rides are only matched when travelling to the same or to neighboring branches when the combined trip and return is shorter than the sum of individual trips. With few requests ($S=2$, panel \textbf{b}), all users request a shared ride. The expected detour and inconvenience is small since it is unlikely to be matched with another user. As the number of users increases ($S=4$, panel \textbf{c}), half of the destinations stop sharing in an alternating sharing/non-sharing pattern around the origin. 
    In this configuration, users requesting a shared ride never suffer any detour while users that do not share are disincentivized from doing so due to their high expected detour (compare bottom part of panel \textbf{c}).
    For high numbers of users ($S=12$ and $30$, panels \textbf{d} and \textbf{e}), the probability to be matched with another user when requesting a shared ride increases and the financial incentives cannot fully compensate the expected inconvenience. The adoption of ride-sharing decreases until the financial incentives exactly balance the expected inconvenience (panels \textbf{d} and \textbf{e}, bottom). 
    Illustrated here for financial discount $\epsilon = 0.2$ and inconvenience and detour preferences $\zeta = 0.3$ and $\xi = 0.3$.
}
    \label{fig:FIG3}
\end{figure}

To understand how these incentives determine the adoption of ride-sharing, we study sharing decisions in a stylized city network \cite{Lion2017} with a common origin $o$ in the center (e.g. a central downtown location) and multiple destinations $d$ (illustrated in Fig.~\ref{fig:FIG3}). Two rings define urban peripheries equidistant from the city center. Branches represent cardinal directions of destinations. Requests for shared rides will only be matched along adjacent branches, if the shared ride reduces the total distance driven to deliver the users and to return to the origin compared to single rides, consistent with a profit-maximizing service provider. Pairing at most two users who request a shared ride, the problem of matching shared ride requests reduces to a minimum-weight-matching with an efficient solution, eliminating the influence of heuristic matching algorithms \cite{Molkenthin2019, alonso2017demand} (see Methods for details).

In this one-to-many setting, users requesting a shared ride would only share a ride if they make their requests within some small time window $\tau$. Therefore, we consider a game with $S = s\,  \tau$ users travelling to a uniformly chosen destination location, where $s$ denotes the average request rate. These users have the option to book a single ride or a shared ride at discounted trip fare. Their decision to share depends on their expected utility difference $E[\Delta u(d)]$ [Eq.~\eqref{eq:utility_diff}], now depending on their respective destination $d$. Users observe their respective utility differences $E[\Delta u(d)]$ over a number of rides and adapt their sharing decision to maximize their expected utility. Eventually, users' sharing decisions converge to the equilibrium probabilities $\pi^*(d)$, reflecting an optimal response that maximizes the utility of users going to destination $d$ (see Methods for details).

At fixed discount $\epsilon$ and preferences $\zeta$ and $\xi$ ride-hailing users may decrease their overall adoption of ride-sharing $\langle \pi^* \rangle$ as the total number $S$ of users increases (see Fig.~\ref{fig:FIG3}a, blue), even though ride-sharing becomes more efficient with higher user numbers. Here $\langle \cdot \rangle$ denotes the average over all destinations $d$. While for small request rates everybody is requesting shared rides (Fig.~\ref{fig:FIG3}b), a distinctive sharing/non-sharing pattern emerges along the branches of the city network upon higher demand (Fig.~\ref{fig:FIG3}c,d), before the adoption of ride-sharing eventually fades out for high request rates, $S \gg 1$ (Fig.~\ref{fig:FIG3}e). This observation offers a novel perspective on the prevalent conclusion that increased demand improves the shareability of rides \cite{Santi2014,Molkenthin2019}. While more rides are potentially shareable, less people may be willing to share them.

The underlying incentives explain this phenomenon: 
Ideally, a user wants to book a shared ride (financial incentive) but without actually sharing the ride (inconvenience and detour). This discrepancy is consistently observable also in public vocalization of user sentiment about shared ride experiences \cite{Sarriera2017, Lippke2020, Pratt2019}, and exemplarily summarized by the user quote 'Every time I take a [shared ride] and it ends up being just me the entire ride I feel like a genius' \cite[pp. 112]{Morris2019}.
The expected detour and inconvenience mediate a repulsive interaction between the sharing decisions of ride-hailing users, turning ride-sharing decisions into a complex anti-coordination game. 
For small request rates, i.e. small numbers of concurrent users $S$, the probability $p_\mathrm{match}(d)$ for a user with destination $d$ to be matched with other users is low (see Fig.~\ref{fig:FIG3}a, gray). Consequently, the expected detour $E[d_\mathrm{det}(d)] = p_\mathrm{match}(d) \, E[d_\mathrm{det}(d) \,|\, \mathrm{match}]$ is also small (analogously for the inconvenience). As illustrated in Fig.~\ref{fig:FIG3}b, bottom, financial incentives outweigh the expected disadvantages of ride-sharing such that everybody is requesting shared rides, $\pi^*(d) = 1$ for all destinations $d$, but is only rarely matched with another user. 
As the number of users $S$ increases, the provider can pair ride requests more efficiently given constant sharing decisions, $\partial p_\mathrm{match}(d) / \partial S > 0$, resulting in more requests that are actually matched with another user (see Fig.~\ref{fig:FIG3}a). Consequently, the expected detour and inconvenience also increase. However, instead of reducing the average adoption of ride-sharing homogeneously across all destinations, neighboring destinations adopt opposing sharing strategies (see Fig.~\ref{fig:FIG3}b). In this sharing pattern, only destinations in identical cardinal direction can and will be matched into a shared ride, minimizing the detours for shared requests and simultaneously disincentivizing other users to start sharing due to high expected detours (Fig.~\ref{fig:FIG3}c-e bottom). As the number of users $S$ increases further, the probability $p_\mathrm{match}(d)$ would also increase at given sharing adoption $\pi(d)$. This leads to an adoption of mixed sharing strategies where the financial discounts and the expected inconvenience are exactly in balance (Fig.~\ref{fig:FIG3}d and e). 
Further numerical simulations demonstrate that this transition robustly exists also for heterogeneous demand distribution across the destinations, asymmetric street network topologies modeled by different origin locations within the network, for stochastic utility functions and imperfect information, as well as under different matching strategies by the service provider (see Supplementary Note 4 with Supplementary Figures 10-14, 17).

Naturally, if the discount $\epsilon$ is sufficiently large such that the financial incentives completely compensate the expected inconvenience, $\epsilon > \zeta$, all users share also in the high request rate limit, $S \rightarrow \infty$. In this limit, $d_\mathrm{single} = d_\mathrm{inc}$ as detours disappear, $E[d_\mathrm{det}] \rightarrow 0$, due to an abundance of similar requests. This transition is robust to changes of the model details, though under different matching strategies where detours remain possible in the high demand limit (see Supplementary Figure 17), the financial incentives required to achieve high sharing adoption may be larger.

Figure \ref{fig:FIG4}a summarizes these results in a phase diagram for the ride-sharing decisions as a function of financial discounts per inconvenience tolerance, $\epsilon/\zeta$, and number of users $S$, illustrating under which conditions the users adopt ride-sharing (high-sharing regime) and under which conditions the users only share partially or not at all (low-sharing regime). 

For fixed values of financial discounts $\epsilon$ relative to the inconvenience preference $\zeta$ of the users, different behavior emerges (Fig.~\ref{fig:FIG4}a): If $\epsilon/\zeta$ is sufficiently large ($\epsilon/\zeta > 1$), the system is in the high-sharing state and all users request a shared ride ($S_\mathrm{Share} = S$). Otherwise ($\epsilon/\zeta < 1$), the system transitions from the high- to a partial and finally to the low-sharing state (compare Fig.~\ref{fig:FIG3}). Figure~\ref{fig:FIG4}b illustrates the scaling of $S_\mathrm{Share}$ in both cases as $S$ increases. In the low-sharing regime, $S_\mathrm{Share}$ eventually becomes constant for large $S$, such that $S_\mathrm{Share} / S \rightarrow 0$ as $S \rightarrow \infty$ (compare Fig.~\ref{fig:FIG3}a). This implies a discontinuous phase transition between low-sharing and high-sharing regimes for large $S$ when the financial incentives exactly balance the inconvenience, $\epsilon_c/\zeta_c = 1$ (see Supplementary Note 3 and Supplementary Figure 9).

\begin{figure}
    \centering
    \includegraphics{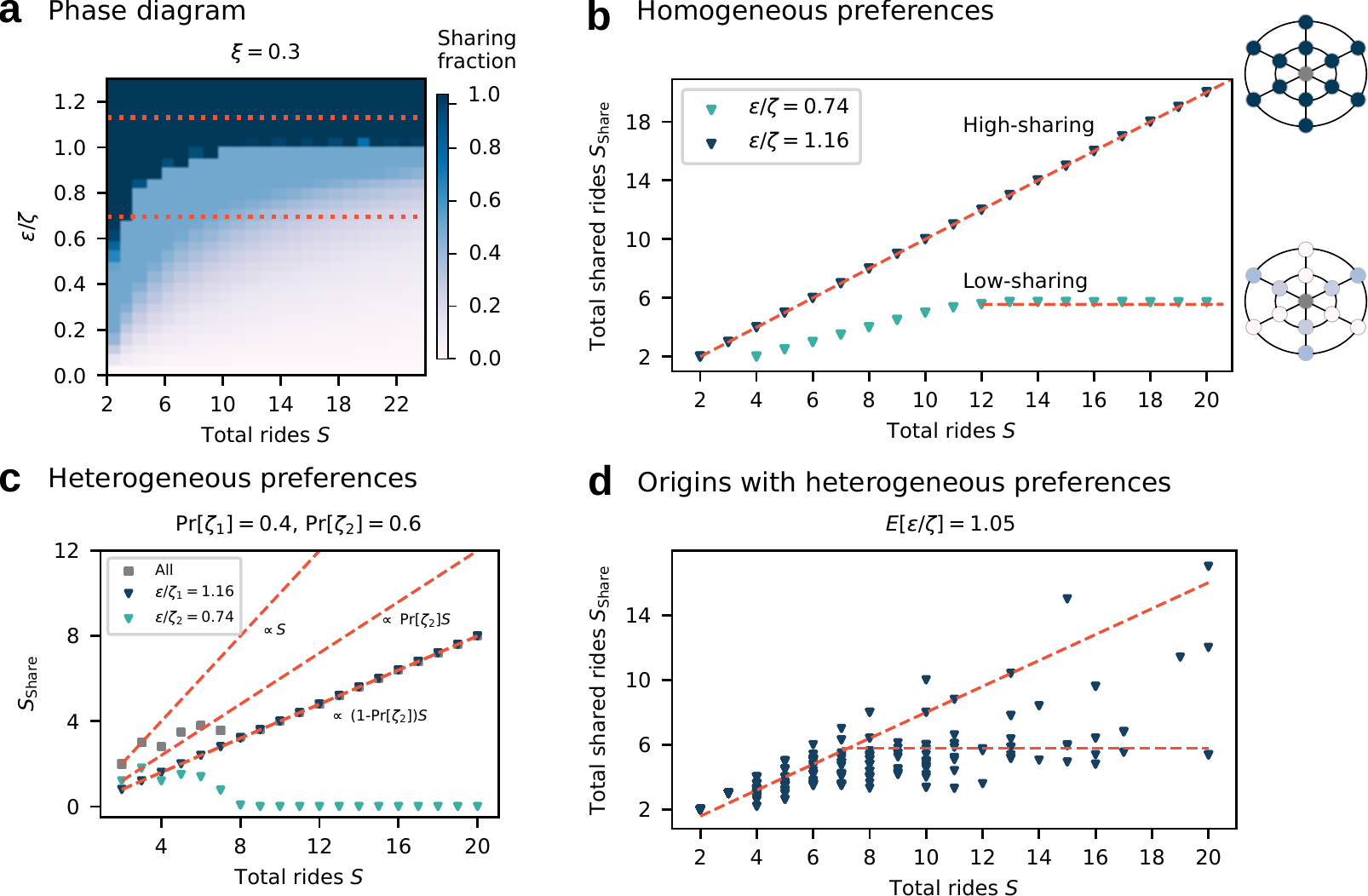}
    \caption{
    \textbf{Transition from low to high ride-sharing adoption.}
    \textbf{a} Phase diagram of the fraction of shared rides $S_\mathrm{share}/S$ for different relative importance of financial and inconvenience incentives $\epsilon/\zeta$. Ride-sharing is adopted dominantly if the financial discount fully compensates the expected inconvenience, $\epsilon/\zeta > 1$ (high-sharing, dark blue). 
    Otherwise, the total number of shared ride requests saturates and the overall adoption of ride-sharing decreases with increasing number of users $S$ (low-sharing, compare Fig.~\ref{fig:FIG3}a). In the limit of infinitely many requests $S \rightarrow \infty$ the transition becomes discontinuous (see Supplementary Note 3).
    \textbf{b} Qualitatively different sharing behavior emerges for different relative incentives $\epsilon/\zeta$ (compare red lines in panel \textbf{a}). When $\epsilon/\zeta > 1$ all users request shared rides ($S_\mathrm{share} = S$, dark blue triangles). When $\epsilon/\zeta < 1$, the system is in a low-sharing regime where users request shared rides at low numbers of users $S$ but the number of shared ride requests saturates and becomes constant at high $S$ ($S_\mathrm{share} < S$, light green triangles).
    \textbf{c} Hybrid states of high- and low-sharing adoption may emerge if users with heterogeneous preferences $\epsilon/\zeta$  mix and interact. A fraction of users (for whom $\epsilon/\zeta>1$) is in the high sharing regime (blue). The others (green, for whom $\epsilon/\zeta <1$) decrease their ride-sharing adoption as the overall demand increases, consistent with the prediction for homogeneous user preferences (panel b). Macroscopically, the system exhibits partial ride-sharing adoption (grey). 
    \textbf{d} The superposition of different realizations of this partial ride-sharing adoption represents the expected outcome in a city with multiple origins, each with heterogeneous preference distributions and demand (see Methods for parameters and Supplementary Note 4 for simulation details). While the macroscopic state suggests partial ride-sharing adoption, individual origins and user groups split into a mix of low- and high- sharing states, following the fundamental adoption regimes from the basic model.
    }
    \label{fig:FIG4}
\end{figure}

For heterogeneous preferences within the population (e.g. different preferences of the individual users requesting rides from the same location) the transition robustly persists per user type. If $\epsilon/\zeta_i < 1$ for parts of the local ride-hailing users, identified by their destination and preferences, these individuals transition from high- to low-sharing as the demand $S$ increases. The other part of the population, for whom $\epsilon/\zeta_i > 1$, remains in the high-sharing state. Macroscopically, the system appears to be in a partial-sharing state even at very high demand (compare Fig.~\ref{fig:FIG4}c), but in fact subsets of the population adopt opposing sharing strategies. The state of ride-sharing adoption across a city, i.e. across different origins each with a different distribution of inconvenience parameters and demand for rides $S$, is described by a superposition of these mixed states (see Fig.~\ref{fig:FIG4}d). Macroscopically, the system may appear to be in a hybrid state of partial- and low-sharing adoption, even when the aggregate population on average satisfies $E[\epsilon/\zeta]>1$ (see Methods, Supplementary Note 3 and Supplementary Figures 15-16 for simulation details).

\subsection{Ride-sharing activity in New York City and Chicago}

Ride-sharing adoption across different parts in New York City (taxi zones) and Chicago (community areas), illustrated in Fig.~\ref{fig:FIG5} (see Methods and Supplementary Notes 1 and 2 for details), matches the qualitative sharing behavior expected for multiple origins with heterogeneous preferences and demand (compare Fig.~\ref{fig:FIG4}d and Supplementary Note 4). 

At locations with a low request rate $s$, the number of shared ride requests increases approximately linearly with more requests, $s_\mathrm{Share} \sim s$. Though even in the low demand limit, the ride-sharing adoption in New York City and Chicago, corresponding to the diagonal branches in Fig.~\ref{fig:FIG5}a,b, is below 100\% (approximately $20\%$ in New York City and $40\%$ in Chicago). In terms of our ride-sharing game, the remaining fraction of requests for single rides may correspond to a user group with high relative importance of inconvenience compared to financial incentives, $\epsilon/\zeta \ll 1$, or that otherwise does not consider sharing as an option. In this interpretation, the smaller value for New York City is consistent with a large fraction of high-income and business customers in Manhattan who likely place a higher value on convenience than financial incentives.

\begin{figure}
    \centering
    \includegraphics{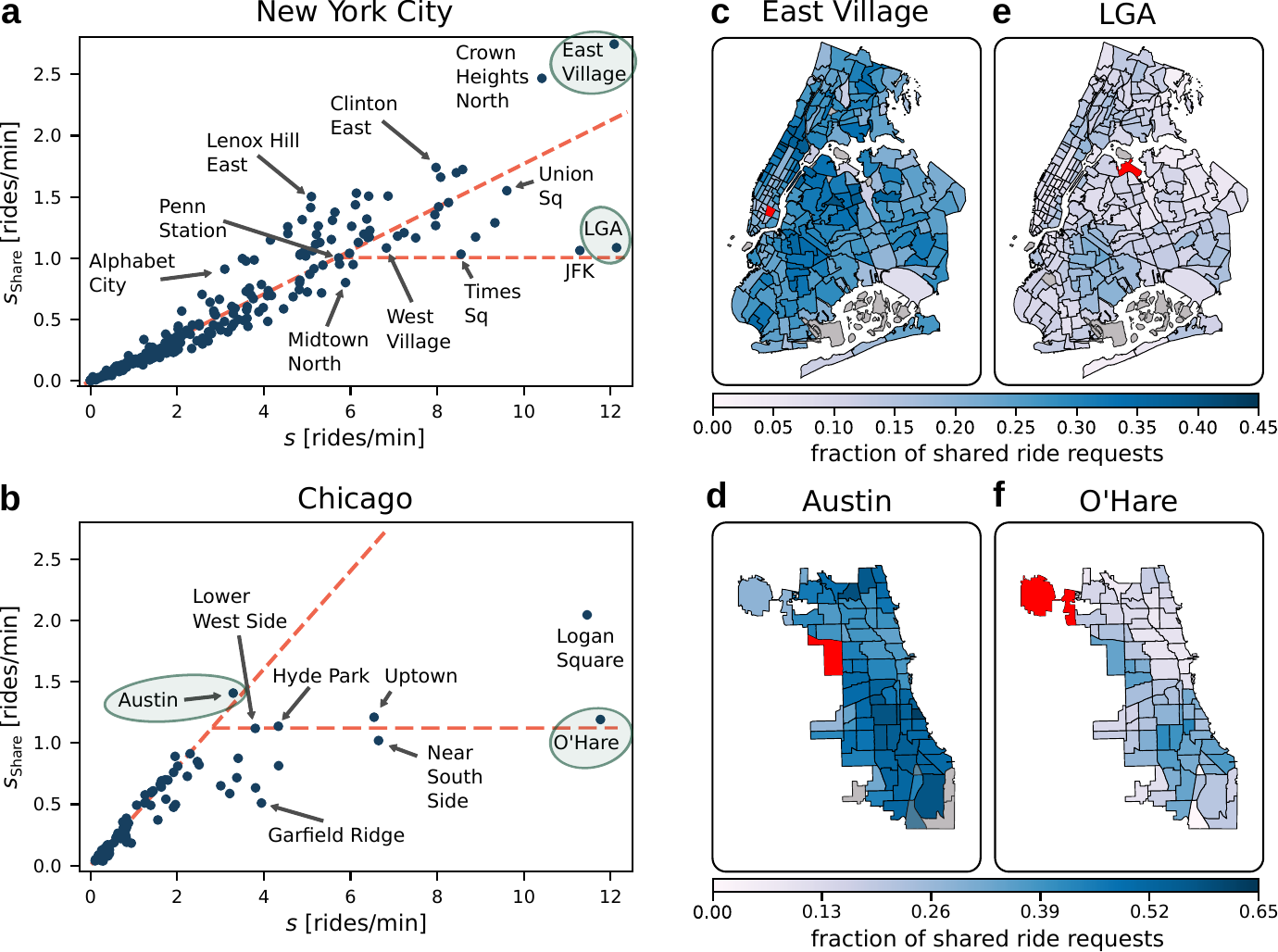}
    \caption{ \textbf{Ride-sharing adoption in New York City and Chicago is consistent with the predicted high- and low-sharing regimes.} 
    \textbf{a,b} 
    Sharing decisions for New York City and Chicago (blue dots) distribute between the two branches corresponding to the high- and low-sharing regime, consistent with the model predictions under heterogeneous user preferences (compare Fig.~\ref{fig:FIG4}).
    At low request rates, the number of requests for shared rides increases linearly with the total number of requests (compare red diagonal). At high request rates, the sharing decisions differ between locations (compare Fig.~\ref{fig:FIG1} and \ref{fig:FIG4}, see also Supplementary Note 1 and 2). As inconvenience preferences $\zeta$ are naturally heterogeneous in the cities, adoption is in a hybrid low/high-sharing state.
    \textbf{c-f} Ride-sharing adoption is consistently higher across destination zones in the high-sharing regime compared to the low-sharing regime. The predominantly linear increase of the number of shared rides in New York City as demand increases suggests broadly sufficient financial compensation of sharing disutilities, or, alternatively a very broad range of user preferences, leading to a stable fraction of ride-sharing adoption.
    However, the slope of the high-sharing branch indicates that only about $20\%$ of ride-hailing users consider ride-sharing as an option. 
    While about $40\%$ of requests are shared in the high-sharing regime in Chicago, this potential is largely not realized. 
    The available data points at locations with relatively high request rate indicate a growth with the request rate that is much weaker than on average for the entire data set, or even absent, consistent with the low-sharing regime observed in our model.
    Seven large downtown zones in Chicago with up to 50 requests per minute (not shown) fall in between the high- and low-sharing state, likely representing the average of sharing behavior over a diverse population of ride-hailing users as expected for users with heterogeneous preferences (see Supplementary Figure 6 for details).
    }
    \label{fig:FIG5}
\end{figure}

At higher request rates, sharing decisions differ by origin zone and split between low and partial sharing states (compare Fig.~\ref{fig:FIG1}). In New York City (Fig.~\ref{fig:FIG5}a), Crown Heights North and East Village exhibit a relatively high ride-sharing adoption in line with that observed in low demand zones, indicating $\epsilon$ is sufficiently large to compensate the expected inconvenience and detour effects for a significant fraction of the users. Other origins with a similarly high request rate, such as JFK and LaGuardia airports, do not follow this trend and exhibit a smaller number of shared ride requests. In terms of our model, we expect that $s_\mathrm{Share}$ has largely saturated in these zones and the given financial incentives do not outweigh the perceived inconvenience of ride-sharing. In particular at the airports, it seems plausible that financial incentives for ride-sharing are less important to users in the context of already costly plane tickets. 
In Chicago (Fig.~\ref{fig:FIG5}b), we find high-demand zones with an approximately constant number of shared ride requests, consistent with the low-sharing regime (horizontal branch $s_\mathrm{Share} = \mathrm{const.}$ in our model).
In contrast, no zones with high demand show the same, relatively high ride-sharing adoption as zones with low demand. Some large downtown zones in Chicago with up to 50 requests per minute fall in the partial sharing regime expected for zones that effectively aggregate sharing decisions over a broad distribution of user preferences.


\section*{Discussion}

Ride-sharing bears a large potential in the transition towards more sustainable mobility \cite{Santi2014, Tachet2017}. Yet, it remains poorly understood how to unlock this potential due to the complex interplay of demand patterns, matching algorithms, available transportation options, urban environments and the relevant incentive structure governing the adoption of shared rides. We have introduced a game theoretical model capturing incentives for and against ride-sharing from a user perspective, reflecting the major incentives found in empirical studies of users' ride-sharing experience \cite{Morris2019, Sarriera2017, Lippke2020, LO2018, Schwieterman2018, Pratt2019,Ruijter2020}. The model offers mechanistic insight into the collective effect of these incentives on individual ride-sharing decisions, unveils a discontinuous transition towards high overall ride-sharing adoption, and consistently explains the qualitative adoption of ride-sharing observed from 360 million empirical trip records from New York City and Chicago.

The ratio of financial discounts to inconvenience preferences acts as the control parameter in the model, separating two disparate regimes of ride-sharing adoption: one where the number of shared rides increases as the overall demand for rides increases (high-sharing regime) and one where it saturates (low-sharing regime), despite more efficient matching options and less detour as demand increases. Both regimes are separated by a regime with partial ride-sharing adoption that disappears in the high-demand limit. These results complement the finding of increased potential shareability of rides in high-demand settings \cite{Santi2014, Tachet2017} and may help to increase the service adoption to realize the full potential of ride-sharing under these conditions.

For homogeneous preference types across the user base, the adoption switches abruptly from the low adoption to the high adoption regime with a small change of the financial incentives and the transition between the two regimes becomes discontinuous (see Supplementary Note 3 for a mathematical proof). For heterogeneous preference types, as naturally expected in real cities with diverse population groups, the discontinuous transition robustly persists per user type. Macroscopically, however, heterogeneous preferences may induce a broad variance in the sharing adoption and yield mixed sharing decisions between the high- and -low sharing limit, blurring the abrupt transition towards high ride-sharing adoption as financial incentives increase. In line with our model predictions under spatially heterogeneous preferences, ride-sharing adoption observed in 360 million ride-sharing decisions from New York City and Chicago is broadly distributed across the cities, bounded between the high- and low-sharing regime (compare Fig.~\ref{fig:FIG4} and \ref{fig:FIG5}). Hence, the results above provide a consistent theoretical model and offer a possible explanation of qualitative features of ride-sharing adoption in urban environments, based on empirical model ingredients. The mechanisms captured by the model are independent of details of the incentive structure, utility functions, or matching and service scheme applied by the provider. We illustrate this robustness for a wide range of different conditions beyond those illustrated in Fig.~\ref{fig:FIG3}, including non-symmetric city topologies, heterogeneous demand distribution across possible destinations, noisy or imperfect information or decision-making, different strategies for matching rides, as well as for different simulation parameters (see Supplementary Note 4). However, deriving specific quantitative predictions from the model would require more detailed knowledge about users' preferences beyond the linear utility function assumed in our model. While the linear scaling in our model captures the basic features of the interactions, other models commonly assume a threshold dynamic to describe the impact of detour \cite{Santi2014, Tachet2017}. In addition, correlations in the demand structure and non-local matching of rides with different origins as well as the interplay between different service providers may also affect ride-sharing adoption. Similarly, the heterogeneity of ride-sharing adoption across different parts of the cities, expected in the low-sharing regime, seems to be dominated by socio-economic factors rather than by the pattern formation dynamics observed in our model network \cite{Wolf2021}.

Future research may investigate in more detail the impact of inconvenience on the adoption of ride-sharing, but also extend the analysis to additional factors such as users' sustainability attitudes, explicit risk aversion in the light of detour uncertainty, or mode choices with regard to public transportation alternatives. Our model description may already provide a theoretical framework for many of these factors influencing ride-sharing adoption on an aggregate level. For example, sustainability or uncertainty preferences to first approximation scale with the additional distance driven and may thus be effectively described by the detour preference. Similarly, alternative public transport options may be captured by modifying the effective financial discount and relative inconvenience preferences for individual destinations.

The sharp transition to high-sharing adoption predicted by our model for any given set of preferences of a user, suggests that even a moderate increase of financial incentives or a small improvement in service quality may disproportionately increase ride-sharing adoption of user groups currently in the low-sharing regime under a broad range of conditions. On the other hand, the overall low fraction of shared ride requests observed in the empirical trip records, even in the high-sharing regime, suggests that an additional societal change towards acceptance of shared mobility is required \cite{Sovacool2020} to make the full theoretical potential of ride-sharing accessible \cite{Santi2014, Tachet2017}. A carefully designed incentive structure for ride-sharing users adapted to local user preferences is essential to drive this change and to avoid curbing user adoption or stimulating unintended collective states \cite{Schroder2020, Helbing2000Freezing}. This is particularly relevant in the light of increasing demand as urbanization progresses \cite{un2015_sustainbleDevelopmentGoals}. In the broader context of macroscopic mode-choice behavior, e.g. between private car, ride-hailing or public transport, results and extensions of our model should be considered also from the perspective of rebound effects, such as more traffic induced by higher demand counteracting the benefits of ride-sharing. Nonetheless, the overall impact of more attractive ride-sharing on sustainability of urban transport is likely to be positive \cite{Morris2019, Jenn2020}. Overall, the approach introduced above can serve as a conceptual framework to work towards sustainable urban mobility by regulating and adapting incentives to promote ride-sharing in place of motorized individual transportation.

\clearpage


\section*{Methods}

\textbf{New York City ride-sharing data}. We analyzed trip data of more than 250 million transportation service requests delivered through high-volume For-Hire Vehicle (HVFHV) service providers in New York City in 2019. The data is provided by New York City's Taxi \& Limousine Commission (TLC) \cite{data_NYC} and consists of origin and destination zone per request, pickup and dropoff times, as well as a shared request tag, denoting a request for a single or shared ride. We compute the average request rate across all data throughout 2019 taking 16 hours of demand per day as an approximate average.

For fixed origin-destination pairs we determine the sharing fraction as the ratio of the total number of shared ride requests and the total number of requests. 
Departure and destination zones represent the geospatial taxi zones defined by TLC \cite{data_NYC}. However, we exclude zones without geographic decoding, nor name tag defined by TLC. For each individual analysis, we fix the origin zone and compute the fraction of shared rides to destination zones. 

To illustrate the spatial sharing adoption (shown in Fig.~\ref{fig:FIG1} and Fig.~\ref{fig:FIG5}c,e), we exclude destination zones where the total number of requests is less than 100 trips in the whole year 2019 to avoid excessive stochastic fluctuations (see Supplementary Note 1 and Supplementary Methods for details). We include these trips in the calculation of the average sharing fraction of the zone though they do not affect the averages due to their small number ($10^2$ compared to about $10^8$ trips in total).\\
 
\textbf{Chicago ride-sharing data}. We additionally analyzed more than 110 million trips delivered by three service providers in Chicago in 2019. The data is provided through the City of Chicago's Open Data Portal and contains, amongst others, information of trip origin, destination, pickup and dropoff times as well as information whether a shared ride has been authorized \cite{data_Chicago}. While information is available on whether a request was matched with another user, the flag denotes all consecutive trips where the vehicle was not empty, even if the passengers never shared part of their trip. We restrict ourselves to geospatial decoding of the city's 77 community areas, as well as trips leaving or entering the official city borders. In analogy to New York City, we compute the average request rate across all data for 2019 taking 16 hours of demand per day as an approximate average reference time and repeat the analysis explained for New York City.\\ 

\textbf{City topology}. For our ride-sharing model we construct a stylized city topology that combines star and ring topology \cite{Lion2017}. Starting from a central origin node, rides can be requested to $12$ destinations distributed equally across two rings of radius 1 (inner ring) and 2 (outer ring), as depicted in Figure~\ref{fig:FIG3}. The distances between neighboring nodes on the same branch are set to unity. Correspondingly, the distances between neighboring nodes are $\pi/3$ on the inner, and $2\pi/3$ on the outer ring.\\

\textbf{Ride-sharing adoption}. We compute the equilibrium state of ride-sharing adoption by evolving the adoption probabilities $\pi(d,t)$ following discrete-time replicator dynamics \cite{Cressman2014,Gaunersdorfer1995}
\begin{align}
    \pi(d,t+1) = r(d,t) \, \pi(d,t),
    \label{eqn:ReplicatorEqn}
\end{align}
where the reproduction rate $r(d,t)$ at destination $d$ and time $t$ is
\begin{align}
    r(d,t)&=\frac{E[u_\mathrm{share}(d,t)]}{E[u(d,t)]}= \frac{u_\mathrm{single}(d)+E[\Delta u(d,t)]}{u_\mathrm{single}(d)+\pi(d,t)E[\Delta u(d,t)]}
    \label{eqn:ReplicatorFactor}
\end{align}
and $E[X]$ represents the expectation value of random variable $X$. Conceptually, each user observes their utility difference between single and shared rides over a number of rides (e.g. using the service for week) and then adjusts their strategy $\pi(d,t)$ for the next time step. Users thus effectively learn their optimal equilibrium strategies where they cannot increase their utility by changing their decisions.

We realize this process in the following way:
We prepare the system in an initial state $\pi(d,0)=0.01$ of ride-sharing adoption for all destinations $d$, modeling the emergence of sharing. We fix the utility for a single ride $u_\mathrm{single}(d) = 4$ (unless stated otherwise) to ensure positivity of Eqn.~\eqref{eqn:ReplicatorFactor}. The value of $u_\mathrm{single}$ effectively controls the step size of the algorithm with $u_\mathrm{single} \rightarrow \infty$ corresponding to the continuous time limit of the replicator equation. The choice of $u_\mathrm{single}$ does not affect the equilibrium states ($\Delta u = 0$ or $\pi^* \in \{0,1\}$) and only determines the speed of convergence (compare Supplementary Fig.~18). 
To evolve Eqn.~\eqref{eqn:ReplicatorEqn}, we numerically compute $E[u_\mathrm{share}(d,t)]=E[u(d,t)|\mathrm{share}]$ at each replicator time step $t$: We generate $n=100$ samples of ride requests of size $S$ of which at least one goes to destination $d$ and requests a shared ride. The other $S-1$ requests
are drawn from a uniform destination distribution. Each of them realizes a sharing decision in line with the current probability distribution $\pi(d',t)$ at their respective destination $d'$ at time $t$. Shared ride requests are matched pairwise (see below). From these $n=100$ game realizations, we compute the conditional expected utility of sharing. We repeat this procedure for all destinations $d$ and then update all probabilities $\pi(d,t)$ according to Eqn.~\eqref{eqn:ReplicatorEqn}.

Before performing measurements on the system's equilibrium observables, we evolve the system for 20000 replicator time steps, corresponding to two million game realizations per destination. We discard a transient of 19000 replicator time steps and quantify the degree of fluctuations per $\pi(d)$ around its mean value over time for the last 1000 time steps. If fluctuations do not exceed a threshold of two percentage points we consider the system equilibrated. Else, we continue to evolve the system for another 5000 replicator time steps, test whether the equilibration threshold is met, and potentially repeat the procedure. The average ride-sharing adoption $\langle \pi(d) \rangle$ over the last 1000 replicator time steps represents a proxy for the stationary solution $\pi^*(d)$ of  Eqn.~\eqref{eqn:ReplicatorEqn} and is plotted as the sharing fraction in Figs.~\ref{fig:FIG3} and \ref{fig:FIG4}. In Supplementary Fig.~19 we quantify the degree of fluctuations per parameter constellation in the phase diagram in Fig.~\ref{fig:FIG4}a and demonstrate a high degree of equilibration, much better than the required threshold.\\

\textbf{Heterogeneous preferences}. Simulations for users with heterogeneous convenience preferences are carried out for fixed inconvenience parameters $\zeta_i$ for different user types. To determine the equilibrium ride-sharing adoption per user type, we repeat the equilibration procedure as explained in the previous paragraph, but the $S$ requests consist of randomly chosen user types with different inconvenience preferences. The probability to draw a user with preference $\zeta_i$ is given by the exogenous parameter Pr$[\zeta_i]$ (see Supplementary Note 4 and Supplementary Fig.~15). 

To produce Fig.~\ref{fig:FIG4}c we fix $\epsilon=0.2, \zeta_1=0.172$ and $\zeta_2=0.270$. The probabilities to draw $\zeta_1$ or $\zeta_2$ are Pr[$\zeta_1]=0.4$ and Pr[$\zeta_1]=0.6$, respectively. Other values yield qualitatively similar results (see Supplementary Note 4 for details). To compute the macroscopically observed combined contribution of shared ride requests from both user types (gray in Fig.~\ref{fig:FIG4}c), we sum the number of shared ride requests from the two user types for given total demand $S$.

To study the approximate macroscopic ride-sharing dynamics of a real city we superimpose 600 origin zones with different local demand for rides and local differences in convenience preferences of users (compare Fig.~\ref{fig:FIG4}d). Per origin we determine the local demand $S$ from an exponential distribution (see Supplementary Note 4 for details). Per origin, users may segment into three groups of convenience preference types $\zeta_i\in\{0.175,0.225,0.275\}$. The probabilities Pr$[\zeta_i]$ govern the distribution of convenience types per origin. Note that the distribution also determines for how many people, on average, $\epsilon$ overcompensates potential inconvenience effects.

Across origins we fix the macroscopic average ratio of financial incentives to inconvenience at $E[\epsilon/\zeta]=1.05$, hinting at a full-sharing state at the aggregated level. We draw the probabilities (Pr$[\zeta_1]$, Pr$[\zeta_2]$, Pr$[\zeta_3]$) from a normal distribution with mean $E[\epsilon/\zeta]=1.05]$ and standard deviation $\sigma=0.085$, fixing the local ratio of financial incentives to expected inconvenience parameters (see Supplementary Fig.~16).\\

\textbf{Matching}. Each request set of size $S$ decomposes into single and shared ride requests. We realize the optimal pairwise matching of requests as follows: For shared requests we construct a graph whose nodes correspond to requests and edges encode the distance savings potential of matching the two requests. To determine the distance savings potential we assume that, independent of single or shared ride, the provider has to return to the origin of the trip. 

After constructing the shared request graph we employ the 'Blossom V' implementation of Edmond's Blossom algorithm to determine the maximum weight matching of highest distance savings potential \cite{Kolmogorov2009}. The matching determines the routing and the realization of inconvenience and detour (see Supplementary Note 3 for more details). Since in the model all user requests are served, this matching strategy is consistent with a profit maximizing service provider.

\section*{Data Availability}

The trip record dataset for New York City is available in the Taxi \& Limousine Commission's (TLC) public repository \url{https://www1.nyc.gov/site/tlc/about/tlc-trip-record-data.page} as 'High Volume For-Hire Vehicle Trip Records' \cite{data_NYC}. The trip record dataset for Chicago is available on Chicago's Open Data portal \url{https://data.cityofchicago.org/Transportation/Transportation-Network-Providers-Trips/m6dm-c72p} as 'Transportation Network Providers - Trips' \cite{data_Chicago}. The simulation datasets generated from the game theoretical model in the current study are available upon reasonable request to the authors. 

\section*{Code Availability}
Full details on the data analysis and game theoretic modelling are provided in the Supplementary Information. The simulation code is available in the public Github repository 'PhysicsOfMobility/ridesharing-incentives', \url{https://doi.org/10.5281/zenodo.4630508} \cite{StorchGithub2021}.

\def\bibsection{\section*{\refname}}

\section*{Acknowledgements}
We thank the Network Science Group from the University of Cologne and Nora Molkenthin for helpful discussions and Christian Dethlefs for help with simulations. D.S. acknowledges support from the Studienstiftung des Deutschen Volkes. M.T. acknowledges support from the German Research Foundation (Deutsche Forschungsgemeinschaft, DFG) through the Center for Advancing Electronics Dresden (cfaed).

\section*{Author contribution} 

D.S. initiated the research with help from M.S. and M.T. All authors designed the research and provided methods and analysis tools. D.S. collected and analyzed the empirical data with help from M.S. D.S. and M.S. designed and analyzed the game theoretic model. All authors contributed to interpreting the results and wrote the manuscript.

\section*{Competing interest}
The authors declare no competing interests.

\clearpage

\setcounter{figure}{0}

\renewcommand{\figurename}{Supplementary Figure}
\renewcommand{\tablename}{Supplementary Table}
\def\bibsection{\section*{Supplementary References}} 

\section{Supplementary Note 1. Ride-sharing adoption in New York City}

In 2019, four high-volume for-hire vehicle (HVFHV) companies (Uber, Lyft, Via, Juno) served more than 250 million transportation service requests in New York City, corresponding to approximately 700,000 trips per day conducted by a population of 8.4 million people \cite{data_NYC,census_NYC}. In this Supplementary Note we explain the spatiotemporal demand distribution underlying this macroscopic number of transportation requests.

\subsection{Origin-destination demand}

The flux matrix $W(\Delta t)$ formalizes the spatiotemporal demand for transportation services between different locations of an urban environment. Its entries $W_{o,d}$ denote the number of transportation requests originating at location $o$ and going to location $d$ within a specific time window $\Delta t$.

For ride-hailing and ride-sharing services, $W(\Delta t)$ decomposes into
\begin{align}
    W(\Delta t) = W^\mathrm{single}(\Delta t)+W^\mathrm{shared}(\Delta t)
    \label{eqn:FluxMatrix}
\end{align}
where $W^\mathrm{single}(\Delta t)$ and $W^\mathrm{shared}(\Delta t)$ are the flux matrices describing trip requests tagged as single or shared rides, respectively. We define the fraction of rides shared as the relative ratio
\begin{align}
    P_{o,d}(\Delta t) = \frac{W_{o,d}^\mathrm{shared}(\Delta t)}{W_{o,d}^\mathrm{single}(\Delta t)+W_{o,d}^\mathrm{shared}(\Delta t)} = \frac{W_{o,d}^\mathrm{shared}(\Delta t)}{W_{o,d}(\Delta t)}, \qquad \mathrm{if}\ W_{o,d}(\Delta t) >w_\mathrm{min}
    \label{eqn:FractionSharedRides}
\end{align}
of shared rides to absolute number of rides between the same origin and destination. Note that Supplementary Eqn.~\eqref{eqn:FractionSharedRides} is only defined if the total flux between origin and destination exceeds a threshold $w_\mathrm{min}$ to reduce bias from fluctuations in statistical analyses of $P_{o,d}(\Delta t)$.

\subsection{Spatiotemporal ride-sharing activity in New York City}

\begin{figure}[!h]
    \centering
    \includegraphics[width=0.95\linewidth]{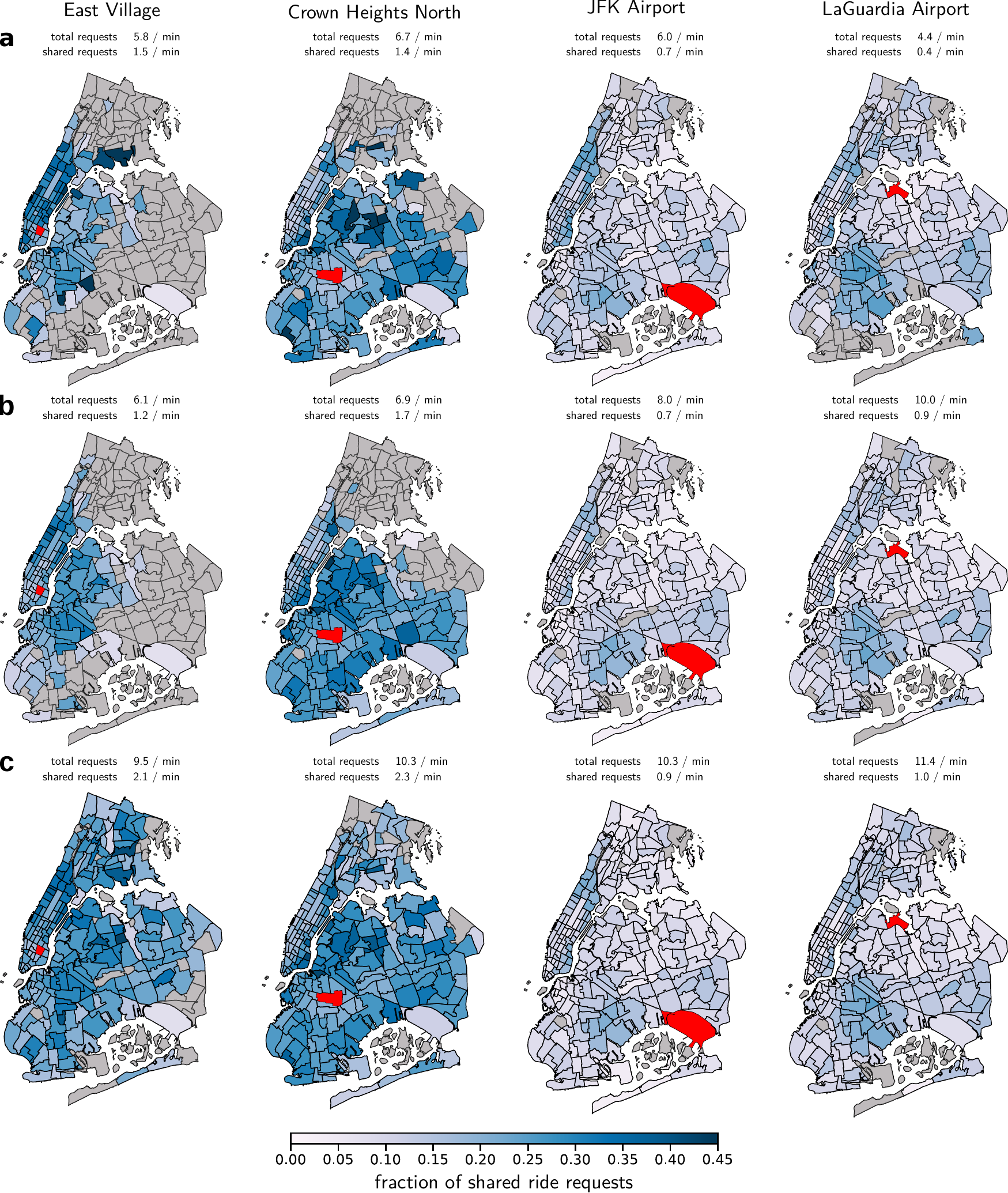}
    \caption{\textbf{Intraday variation of ride-sharing behavior in New York City.} 
    \textbf{a-c}, Morning (6-10 am), midday (12-4 pm) and evening (6 pm - 3 am) average adoption of ride-sharing for the four most requested origin zones of ride-hailing activity in New York City (red). Gray zones indicate insufficient total number of requests and are excluded from the analysis (see Supplementary Methods for details). 
    }
    \label{fig:SFIG1}
\end{figure}

We determine $W^\mathrm{single}, W^\mathrm{shared}$ and $P$ for the 265 taxi zones in New York City from the New York City Taxi \& Limousine Commission's (TLC) HVFHV aggregate trip records between January and December 2019 (compare Fig.~1 in the Main Manuscript, see Data section in Supplementary Methods for more details on data acquisition, structure and treatment). If not stated otherwise, we choose a value $w_\mathrm{min}=100$ of rides per year much smaller than the total number of rides (see Supplementary Methods for details).

Supplementary Figure \ref{fig:SFIG1} illustrates $P_{o,d}$ for the four origin zones $o$ with highest average demand for transportation services, $\sum_{\substack{d=1 \\ d\neq o}}^{265} W_{o,d}$, for three different time windows: morning (6-10 am) including commuting hours (Supplementary Fig.~\ref{fig:SFIG1}a), midday (12-4 pm) reflecting off-peak afternoon hours (Supplementary Fig.~\ref{fig:SFIG1}b) and evening (6 pm - 3 am) encompassing leisure activity hours (Supplementary Fig.~\ref{fig:SFIG1}c). Independent of daytime, all four origins exhibit complex spatial patterns of ride-sharing adoption across destinations. 

For JFK and LaGuardia airport these patterns are robust for all time windows, indicating stable fraction of rides shared to all destinations throughout the day. For Crown Heights North and East Village only few rides are undertaken to far distance destinations in the morning and midday time window (gray areas representing $W_{o,d}<w_\mathrm{min}$).
In the evening, more rides are requested overall, also to far distance destinations. Overall, the qualitative patterns of ride-sharing adoption do not vary significantly with the time of day (compare Fig.~1 in the Main Manuscript).

Across the full set of origin zones in New York City, Supplementary Figure~\ref{fig:SFIG2} suggests an overall trend to higher absolute demand for transportation services in the evening. The fraction of shared rides, however, is not affected by this trend. It is approximately constant throughout the day as illustrated in Supplementary Figure~\ref{fig:SFIG2b}. The average standard deviation of fraction of rides shared across all taxi zones is less than 1.9\% between the three time windows, suggesting an equilibrated system.

An aggregate analysis will naturally be dominated by the high overall demand in the evening and night time. Still, the data suggests that the average ride-sharing adoption in New York City is stable across the day. Hence, an aggregate analysis is representative.

\begin{figure}
    \centering
    \includegraphics[width = 0.95\linewidth]{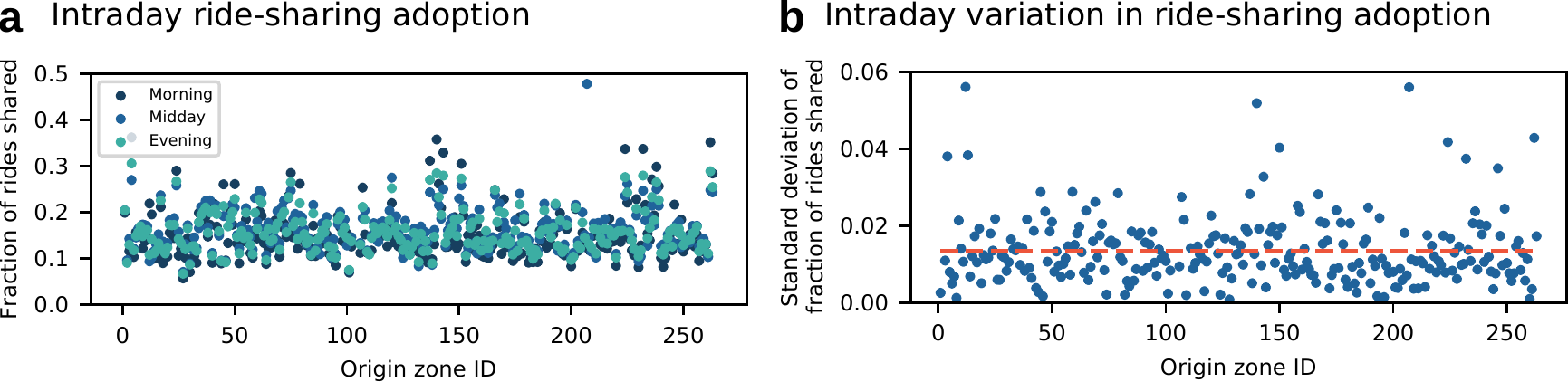}
    \caption{\textbf{Stable ride-sharing adoption intraday in New York City}. (left) During morning, midday and evening time windows the fraction of shared rides is approximately the same for the different origin zones. As before, origin-destination pairs with less than $w_{\min}$ total requests have been excluded from the analysis. (right) The standard deviation of the fraction of rides shared  across the three time intervals hints at average intraday fluctuations of less than 2\% in the fraction of rides shared (red dashed line).
    }
    \label{fig:SFIG2b}
\end{figure}

\subsection{Scaling properties of shared ride requests}

Supplementary Figure~\ref{fig:SFIG2} illustrates a dominant linear scaling in the total number of ride requests $S$ and the number of shared requests $S_\mathrm{Share}$ across origin zones in New York City for all time windows (morning, midday, evening). Such a linear scaling between $S_\mathrm{Share}$ and $S$ indicates sufficient financial incentives to compensate the expected negative effects of ride-sharing with increasing demand. A decrease in slope and eventual saturation corresponds to a situation where financial incentives, expected detour and inconvenience are in balance. $S_\mathrm{Share}$ will not increase upon higher demand for given incentives (compare Fig.~1 in Main Manuscript as well as large $S$ regime in Supplementary Fig.~\ref{fig:SFIG2}).

\begin{figure}[h]
    \centering
    \includegraphics[width=0.95\linewidth]{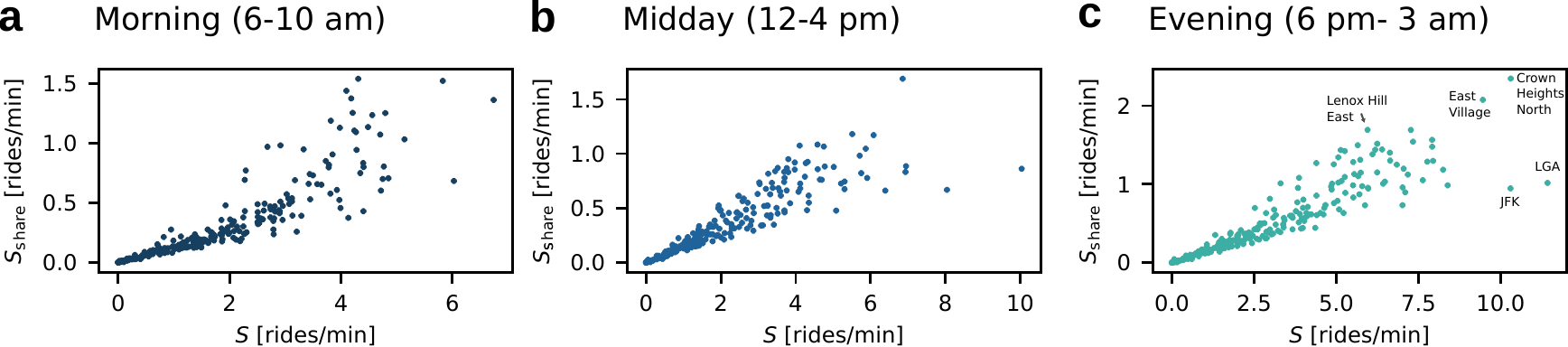}
    \caption{\textbf{Intraday sharing decisions in New York City reflect ride-sharing adoption close to full-sharing regime}. During morning, midday and evening time windows the number of shared rides scales approximately linear in the total number of ride requests, indicating sufficient financial incentives to compensate expected detour and inconvenience effects. For locations in the high request rate limit (compare Fig.~5 in the Main Manuscript) a saturation of $S_\mathrm{Share}$ indicates insufficient financial incentives for given negative aspects of sharing.
    }
    \label{fig:SFIG2}
\end{figure}

\clearpage

\subsection{Ride-sharing adoption at different origin zones in New York City}

Supplementary Figure~\ref{fig:SFIG3} shows daily averages for $W_{o,d}$ for the origins in New York City highlighted in Figure~5a of the Main Manuscript, assuming 16 hours of daily activity as a normalization for the time window $\Delta t$. Supplementary Figure~\ref{fig:SFIG3} (top row) represents origins with lower average ride-sharing adoption, for which the number of shared rides may have saturated (compare horizontal line in Fig.~5a in Main Manuscript). Supplementary Figure~\ref{fig:SFIG3} (bottom row) illustrates origins with relatively higher ride-sharing adoption on the linear ascending branch of the ride-sharing adoption curve in Fig.~5a in Main Manuscript.

\begin{figure}
    \centering
    \includegraphics[width=1.0\linewidth]{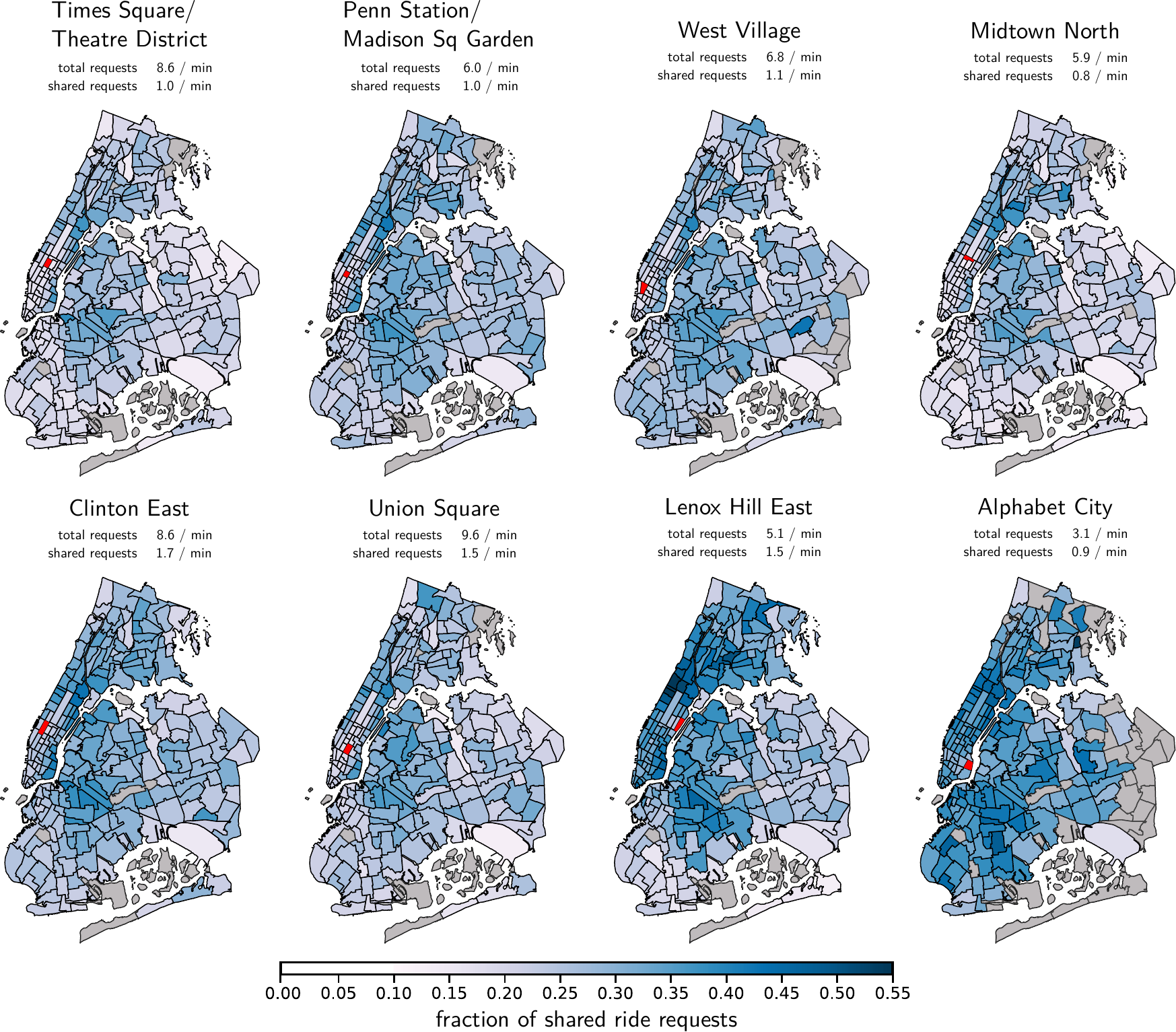}
    \caption{\textbf{Ride-sharing adoption is heterogeneous across New York City.} Origin zones with relatively low ride-sharing adoption (top row, compare horizontal line in Fig.~5a in Main Manuscript) yield similar absolute number of shared rides, despite differences in absolute request rate. Origin zones with yet unsaturated ride-sharing adoption (bottom row, compare Fig.~5a in Main Manuscript) yield larger average fraction of shared rides.}
    \label{fig:SFIG3}
\end{figure}

Consider for example Times Square/Theatre District and Alphabet City (top left and bottom right): While for the first only approximately one in nine ride requests is shared, it is one in three for the latter.

\clearpage

\subsection{Trip and ride-hailing user characteristics in New York City}

\begin{figure}
    \centering
    \includegraphics{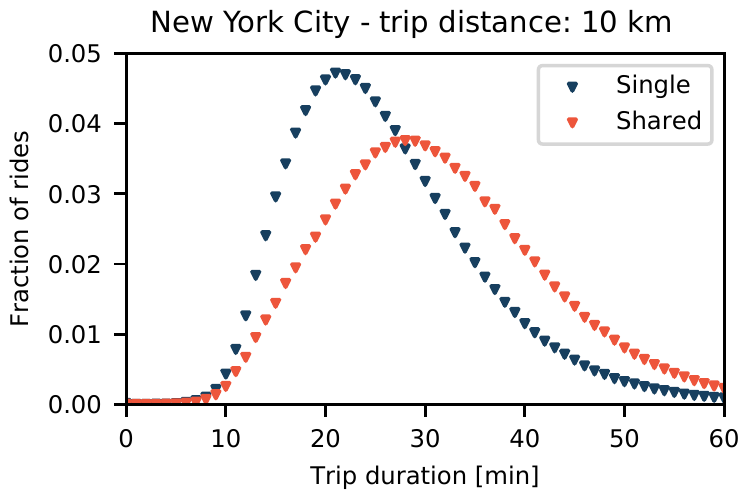}
    \includegraphics{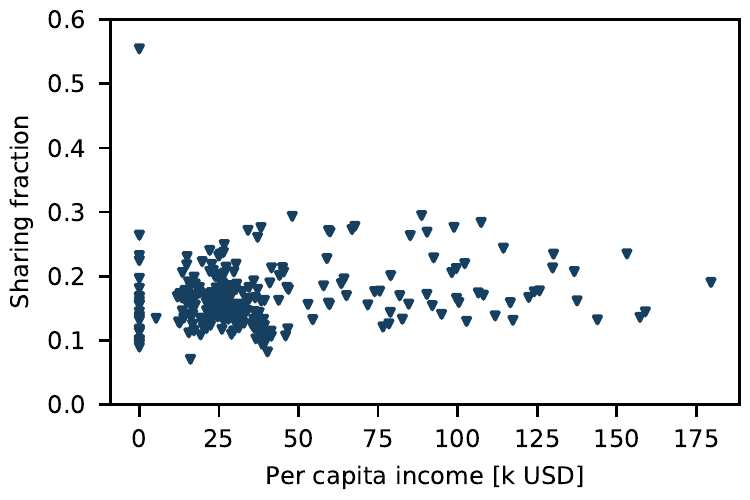}
    \caption{\textbf{In New York City shared rides likely take longer and are adopted largely independent of income.} (Left) Shared rides take longer on average and have higher trip duration variance compared to their single ride equivalents [mean trip duration for single rides between zones with distance between 9.75-10.25 km: 26.4 mins, for corresponding shared rides: 31.5 mins]. (Right) Ride-sharing adoption within a taxi zone and per capita income are only weakly correlated for high incomes exceeding fifty thousand USD [census data based on \cite{USCensus2020}]. Zero income zones include largely uninhabitated areas such as airports, Ellis Island, Jamaica Bay, public parks etc.
    }
    \label{fig:SFIG_NYC_TripChars}
\end{figure}

Empirical studies on user sentiment and focus group interviews with ride-sharing users have demonstrated that financial incentives, detours and uncertainty about the trip duration, as well as inconvenience from sharing a vehicle with strangers are dominant drivers of ride-sharing adoption \cite{Sarriera2017, Morris2019, Lippke2020}. In this section of the Supplementary Information we provide supporting evidence for these conclusions based on trip records from New York City (2019), as well as supporting census data on aggregated socio-economic characteristics of potential user groups.\\

\paragraph{Detour and trip duration uncertainty.} Supplementary Figure \ref{fig:SFIG_NYC_TripChars} (left) illustrates the trip duration distribution for single and shared rides of the same distance. In absence of precise information about trip distances in the available data records we estimate the trip distance as the shortest path distance between pickup and destination zone centroids. Single rides exhibit a trip duration distribution with positive skewness, while for shared ride the distribution is more symmetric. On average, single rides take less time compared to shared rides, likely due to detours from picking up or dropping off additional passengers in the case of a successful matching [approximately 5.1 minutes for a ten km ride in 2019]. For the passenger, the lost travel time may be associated with opportunity cost and thus be regarded as a source of disutility. Additionally, travel time uncertainty, estimated in terms of trip duration variance, is higher for shared rides. Hence, it is more difficult for ride-sharing users to anticipate actual arrival times at their final destination, leading to another form of disutility as frequently reported in empirical studies of ride-sharing user sentiment \cite{Morris2019}.

Other trip distances and other cities exhibit similar trip characteristics (see Supplementary Note 2 and Supplementary Figure \ref{fig:SFIG_CHI_TripChars} for the same analysis in Chicago).\\

\paragraph{Financial incentives.} In New York City the ride-sharing adoption is only weakly correlated with the per capita income if exceeding values of 50 thousand USD (see Supplementary Fig.~\ref{fig:SFIG_NYC_TripChars} (right)). Per capita income may be interpreted as a proxy for the relative importance of financial discounts in the decision to book a single or shared ride. Empirical studies and focus group interviews with ride-sharing users consistently report financial incentives to be a dominant driver of sharing adoption \cite{Morris2019, Lippke2020}. Hence, the nearly constant ride-sharing adoption across incomes in New York City may suggest limited leeway to increase the overall ride-sharing adoption through generally higher financial incentives, since financial stimulation may have already activated significant proportions of user potential considering ride-sharing a feasible transportation option.

\clearpage

\section{Supplementary Note 2. Ride-Sharing adoption in Chicago}

In 2019, three transportation service providers (Uber, Lyft, Via) served more than a total of 110 million transportation service requests in the City of Chicago, corresponding to approximately 300,000 trips per day \cite{data_Chicago}. In this Supplementary Note we demonstrate that the ride-sharing adoption in the city exhibits qualitative features of low- and high-sharing adoption states with complex spatial patterns.

Chicago consists of 77 community areas \cite{data_Chicago}. Supplementary Fig.~\ref{fig:SFIG4_Chicago} illustrates request rate for shared rides as a function of the total request rate for rides.
As illustrated for New York City in the Main Manuscript, Chicago's different communities exhibit spatially heterogeneous ride-sharing adoption. While there exists a subset of communities for which the number of shared ride requests scales linearly in the total number of requests, other origin communities (e.g. Lower West Side, Hyde Park, Uptown, Near South Side, O'Hare) form a branch where the number of shared ride requests has saturated and does not increase with the overall number of ride requests. Similarly to New York City, we observe partial and high-sharing regimes of ride-sharing adoption (compare Supplementary Fig.~\ref{fig:SFIG4_Chicago}b,c right).

The high demand zones shown in the inset of Supplementary Fig.~\ref{fig:SFIG4_Chicago} represent large zones from the city center with extremely high-demand of up to 50 requests per minute (North East Side, Loop, Near West Side, Lake View, West Town, Lincoln Park) with an intermediate sharing fraction. It is likely that Chicago's high-demand zones summarize the ride-sharing behavior of a wide range of the population with heterogeneous ride preferences (i.e. a mix between high and low sharing), naturally representing the average sharing fraction in the city. Comparable city districts in New York City (Manhattan) are split into more smaller zones, and therefore form into the heteroskedastic cone observed at small demand.

\begin{figure}
    \centering
    \includegraphics[width = 0.7 \linewidth]{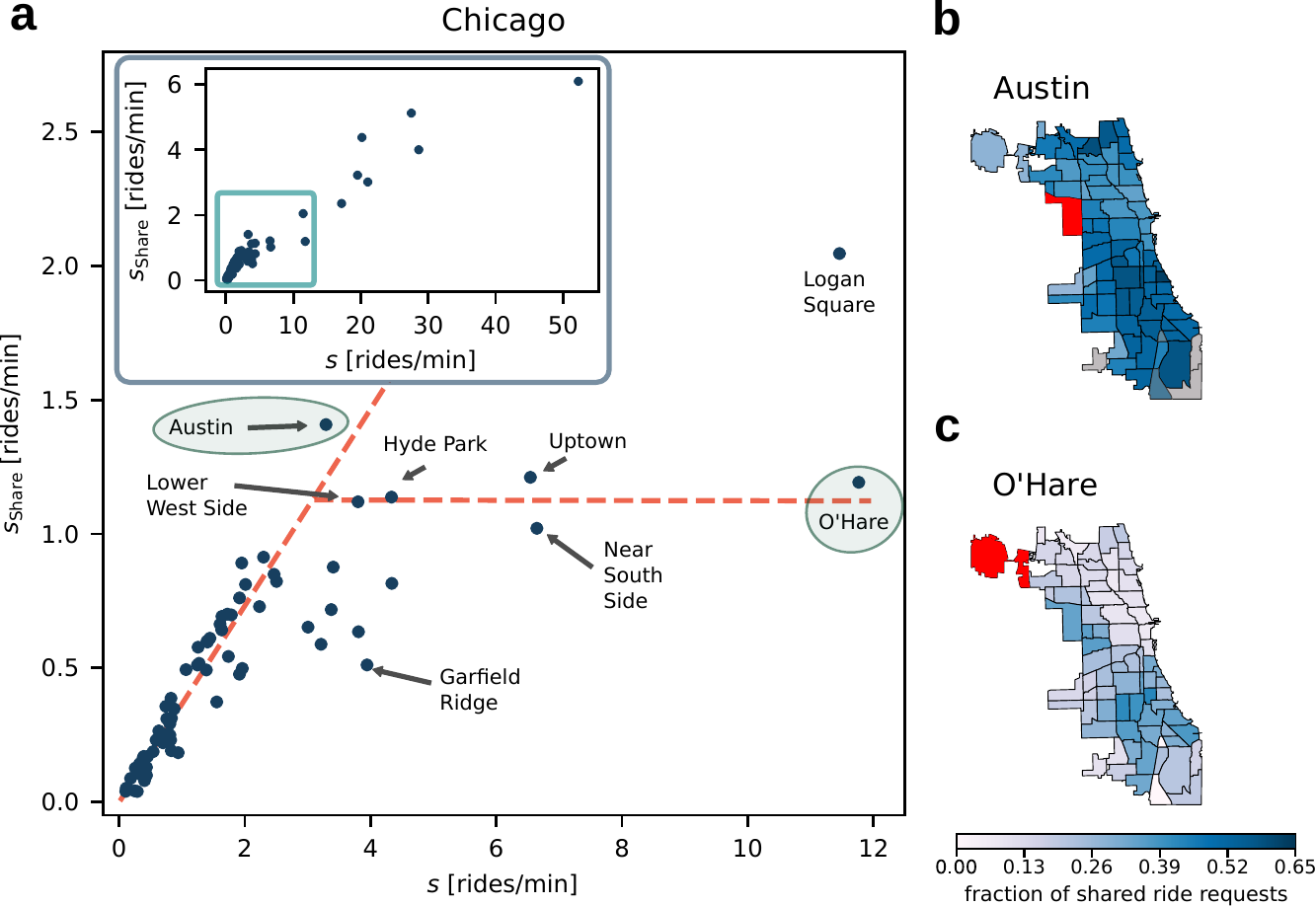}
    \caption{
    \textbf{Chicago exhibits hybrid sharing adoption}. \textbf{a} Sharing decisions in Chicago exhibit a hybrid state between low- and partial-sharing states (blue dots). At low request rates, the number of requests for shared rides increases linearly with the total number of requests. 
    At higher request rates, the sharing decisions do not increase as fast as the demand does, indicating a partial sharing regime. Few communities cross a saturation value (horizontal orange line), hinting at hardly any zones in a full-sharing regime, but generally low adoption of ride-sharing for the given financial incentives. The inset in panel a includes large community zones from the city center with extremely high-demand (North East Side, Loop, Near West Side, Lake View, West Town, Lincoln Park), and trips originating outside of the boundaries of the City of Chicago, whose request rates significantly exceed those of the other communities by up to one order of magnitude (not shown in the main panel, green border).  \textbf{b,c} Trips originating from Austin exhibit high average ride-sharing adoption of up to two thirds of all trips, in contrast to trips starting from O'Hare Airport which are much less frequently shared.
    }
    \label{fig:SFIG4_Chicago}
\end{figure}

\subsection{Trip and ride-hailing user characteristics in Chicago}

Trip records from Chicago as well as aggregated census data provide insight into important determinants of ride-sharing adoption. Similar to New York City, this section of the Supplementary Information provides supporting empirical evidence for detours and trip duration uncertainty, financial incentives and convenience being important aspects in ride-sharing decision-making.\\

\begin{figure}
    \centering
    \includegraphics{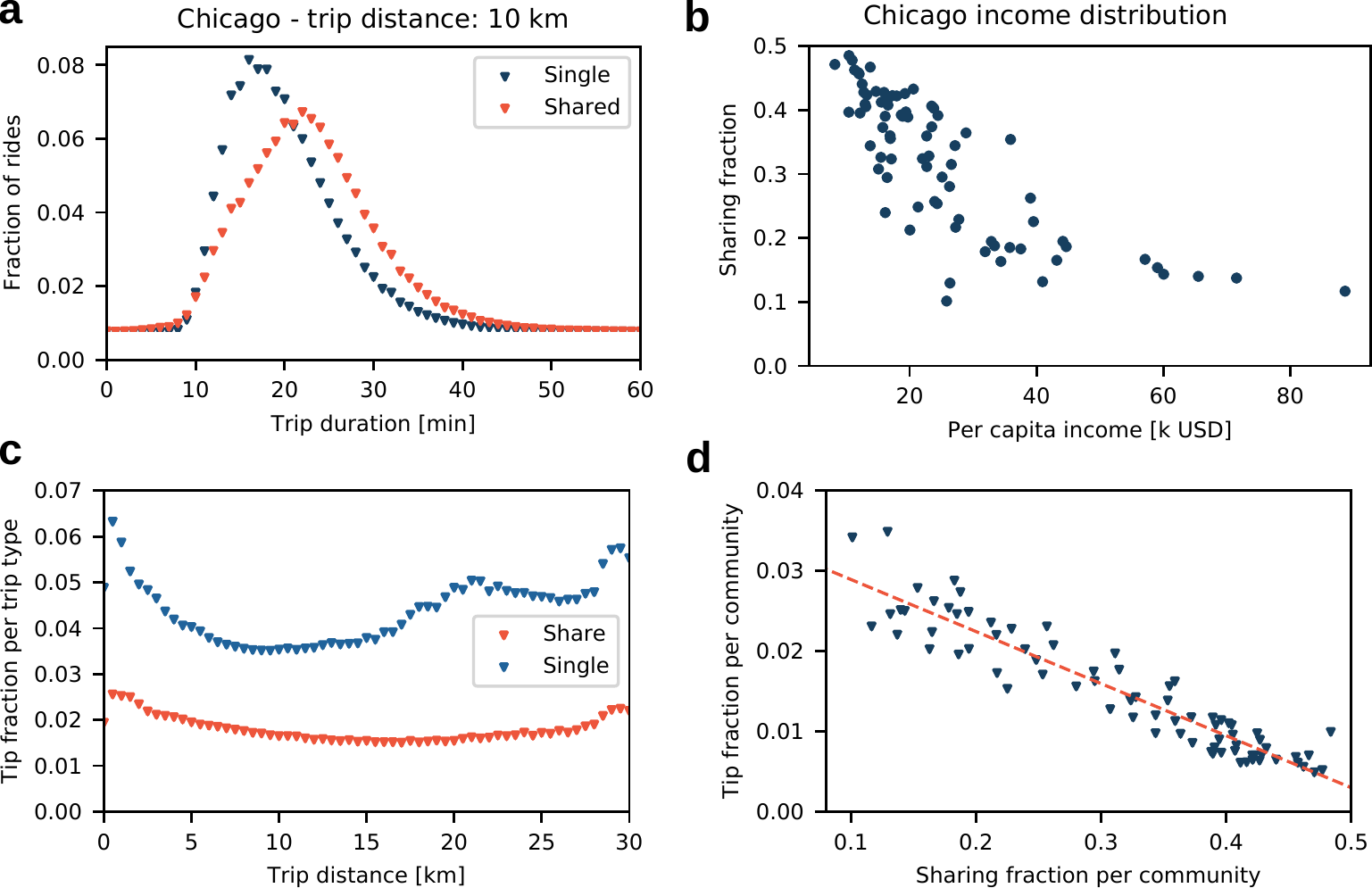}
    \caption{
    \textbf{In Chicago shared rides on average take longer and adoption is correlated with financial observables.}  \textbf{a} Shared rides on average take longer and have higher trip duration variance compared to their same distance single ride  equivalents [mean trip duration for 9.75-10.25 km single rides of 20.1 mins, compared to 23.0 mins for corresponding shared rides]. \textbf{b} In Chicago, sharing fraction and per capita income are negatively correlated [income data based on \cite{Chicago_Income_2020}]. \textbf{c} Across trip types of different distances, shared ride users consistently tip less than single ride users in Chicago, independent of community area. \textbf{d} The average tip fraction per community area is negatively correlated with its sharing fraction.
    }
    \label{fig:SFIG_CHI_TripChars}
\end{figure}

\paragraph{Detour and trip duration uncertainty.} Supplementary Figure \ref{fig:SFIG_CHI_TripChars}a shows trip duration histograms for single and shared rides in Chicago within a given distance interval. Similar to New York City (see Supplementary Fig.~\ref{fig:SFIG_NYC_TripChars} (left)) the trip duration distribution for single rides is positively skewed while it becomes more symmetric for shared rides. The comparison of mean trip durations unveils that single rides are quicker on average (20.1 mins for single rides from all trips with a distance between 9.75-10.25 km, 23.0 mins for corresponding shared rides), and exhibit less trip duration variance.

Ref.~\cite{Schwieterman2019} finds that an average ride-hailing user in Chicago spends approximately 42-108 USD per hour saved when making the decision to use public transit, or ride-hailing services. In that line, the observed average trip duration disadvantage of shared rides may be estimated in the range of 0.7-1.8 USD lost per additional trip minute of a shared ride, assuming linear scaling and similar financial trade-offs as in public transit, thus providing a quantitative estimate for the disutility of ride-sharing. Uncertainty about potential extra time of the trip may add to this estimate at the time of decision-making to book a shared ride or not, e.g. if risk-averse individuals try to anticipate the disutility of worst case detour scenarios.\\

\paragraph{Financial incentives.} Per capita income statistics per community area in Chicago and the corresponding ride-sharing adoption reveal a negative correlation (see Supplementary Fig.~\ref{fig:SFIG_CHI_TripChars}b). Trips originating from community zones with lower per capita income exhibit higher ride-sharing adoption than high income zones, hinting at financial motivation governing ride-sharing decisions (as reported in empirical studies and focus group interviews \cite{Morris2019,Lippke2020}). For origins characterized by incomes exceeding sixty thousand USD the sharing fraction appears to become approximately constant, providing a consistent picture when compared to New York City (see Supplementary Fig.~\ref{fig:SFIG_NYC_TripChars} (right)).\\

\paragraph{Inconvenience.} Tipping is generally perceived as a financial measure for a person's willingness to reward a socially enjoyable experience. Here, we assess the tip fraction -- the ratio of tip to total trip fare -- for single and shared rides as an indirect measure for the perceived convenience of the ride experience (as well as additional evidence for the importance of financial incentives).  Supplementary Figure \ref{fig:SFIG_CHI_TripChars}c shows that ride-hailing users consistently tip less for shared rides in Chicago (across origin zones and anticipated per capita incomes), independent of trip distance. This observation hints at less convenient ride-sharing experiences compared to single rides. Users may try to partially compensate the perceived disutility against additional financial savings. Supplementary Figure \ref{fig:SFIG_CHI_TripChars}d provides a consistent picture of a negative correlation between tip and sharing fraction.

\clearpage
\section{Supplementary Note 3. Ride-sharing anti-coordination game on networks} 

In this Supplementary Note we formally define 
the ride-sharing anti-coordination game introduced in the Main Manuscript. 
We introduce a replicator dynamics governing the evolution of the population's willingness to share their rides. The resulting network dynamics unveils qualitatively different regimes of ride-sharing adoption and spatially heterogeneous sharing patterns, emerging from symmetry breaking.

\subsection{Urban environment}

Denote by $G = (V,E)$ a mathematical graph of an urban street network composed of a node set $V$ and an edge set $E$. Nodes can be identified with individual intersection, census tracts or qualitatively similar zones embedded in space. Edges correspond to streets connecting the different zones and are weighted by the geographical distance between them. The distance 
matrix $D$ bundles the pairwise (shortest path) distances. 
In the following we consider a one-to-many setting where $S$ people request transportation from a single origin $o\in V$ to a destination $d\in V \backslash \{o\}$ on $G$.

\subsection{Replicator dynamics}

Per destination node $d\in V\backslash\{o\}$ the probability $\pi(d,t) \in [0,1]$ defines the local population's ride-sharing adoption when embarking from origin $o$ at time $t$. $\pi(d,t)$ is an aggregate measure for people's ride-sharing willingness, 
describing the average ride-sharing behavior of people with the same origin-destination combination.
The ride-sharing adoption evolves under discrete-time replicator dynamics
\begin{align}
    \pi(d,t+1) &= r(d,t)\pi(d,t)\label{eqn:ReplicatorDynamics}
\end{align}
with reproduction factor
\begin{align}
    r(d,t)&=\frac{E[u_\mathrm{Share}(d,t)]}{E[u(d,t)]}
\end{align}
with the expected utility of sharing $E[u_\mathrm{Share}(d,t)]$ and the population average utility $E[u(d,t)]$ from both shared and single rides. For simplicity, we assume the utility derived from single rides to be constant. Hence, $E[u_\mathrm{single}(d,t)] = u_\mathrm{single}(d) > 0$, allowing to express $E[u_\mathrm{Share}(d,t)] = u_\mathrm{single}(d) + E[\Delta u(d,t)]$. With this shorthand notation the reproduction factor becomes
\begin{align}
    r(d,t)&= \frac{u_\mathrm{single}(d)+E[\Delta u(d,t)}{u_\mathrm{single}(d)+\pi(d,t)E[\Delta u(d,t)]}.
\end{align}
In the limit $u_\mathrm{single}(d)\to \infty$, Supplementary Eqn.~\eqref{eqn:ReplicatorDynamics} becomes equivalent to the continuous-time version of the replicator equation \cite{Gaunersdorfer1995}, but does not otherwise change the equilibrium states of the dynamics.

Depending on the equilibrium value of $E[\Delta u(d)^*]$ the dynamics converges to a pure strategy equilibrium $\pi(d)^* = 0, \ \mathrm{if} \ E[\Delta u(d)^*]< 0$, $\pi(d)^* = 1, \ \mathrm{if} \ E[\Delta u(d)^*]> 0$, or a mixed strategy equilibrium $\pi(d)^*\in (0,1) \ \mathrm{if}\ E[\Delta u(d)^*] = 0$. The incremental utility of sharing decomposes into
\begin{align}
    E[\Delta u(d,t)] &= \epsilon d_\mathrm{single}(d)-\xi E[d_\mathrm{det}(d,t)]-\zeta E[d_\mathrm{inc}(d,t)] \label{eqn:ReplicatorDynamics2}
\end{align}
where $d_\mathrm{single}(d)=D_{o,d}$ is the shortest path distance between origin $o$ and destination $d$, $d_\mathrm{det}(d)$ is the detour from sharing for destination $d$ at time $t$ and $d_\mathrm{inc}(d,t)$ is the distance spent together on a shared ride. While the first distance is deterministic, the latter two are stochastic and depend on the decision of other ride-hailing users and the overall demand for shared rides on the network. Hence, they mediate a coupling between destinations on the network. 

A rescaling of Supplementary Eqn.~\eqref{eqn:ReplicatorDynamics2}
\begin{align}
    E[\Delta u(d,t)] &= \epsilon d_\mathrm{single}(d)\left( 1 -\frac{\xi}{\epsilon} \frac{E[d_\mathrm{det}(d,t)]}{d_\mathrm{single}(d)}-\frac{\zeta}{\epsilon} \frac{E[d_\mathrm{inc}(d,t)]}{d_\mathrm{single}(d)} \right) 
\end{align}
shows that the dimensionless parameters $\xi/\epsilon$ and $\zeta/\epsilon$ as well as the relative detour $E[d_\mathrm{det}(d,t)]/d_\mathrm{single}(d)$ and inconvenience $E[d_\mathrm{inc}(d,t)]/d_\mathrm{single}(d)$ at time $t$ govern whether $E[\Delta u(d,t)]$ is positive or negative. The quantities measure the trade-off between inconvenience or detour to financial discount of a shared ride, respectively. If the financial discount is sufficiently high to compensate for the detours and inconvenience, the replicator dynamics in Supplementary Eqn.~\eqref{eqn:ReplicatorDynamics} will amplify the local population's adoption of ride-sharing. 

\subsection{Expected detour and inconvenience}

The expected detour and inconvenience of shared rides originating from origin $o\in V$, going to destination $d\in V$, depend on (i) the configuration of  destinations in the request set $S$ at time $t$, (ii) the realization of sharing choices across all users, and (iii) the service provider's matching and routing algorithm.

\begin{enumerate}
    \item[($i$)] \textit{Origin-destination distribution}. Denote by $\sigma \in V^S$ the destination request configuration of the $S$ simultaneous transportation requests from $o$. $\sigma$ is a random variable governed by the origin-destination distribution $W_o$. It impacts where users travel and which users may potentially be matched when sharing a ride.
    \item[($ii$)] \textit{Adoption of ride-sharing}. Depending on the user's individual decisions to share their rides, $\sigma$ decomposes into $\sigma_\mathrm{Share}$ and $\sigma_\mathrm{single}$. The realization of destinations in $\sigma_\mathrm{Share}$ determines the potentially shareable rides.
    \item[($iii$)] \textit{Matching and routing algorithm}. Providers match ride requests based on distance savings potentials, which is equivalent to a maximum weight matching problem on a mathematical graph: Shared ride requests define the nodes of this graph. If two rides offer a distance savings potential to the provider compared to two single rides, the ride requests are connected by an edge (see Supplementary Fig.~\ref{fig:MatchingAlgorithm}a). The distance savings potential defines the edge weight. Here, we assume that both for single and shared ride requests the provider needs to return to the trip origin, consistent with the one-to-many setting. The provider's matching algorithm determines the matching of shared ride requests that maximizes the saved distance (see Supplementary Fig.~\ref{fig:MatchingAlgorithm}b). This matching scheme is consistent with a profit-maximizing provider given that there are sufficiently many drivers to serve all customers.
    
    Per matched request pair, the provider defines the trip route to minimize the distance driven. If he is indifferent whom to drop first, he will deliver the passenger with the shorter distance first to minimize customer inconveniences (see Supplementary Fig.~\ref{fig:MatchingAlgorithm}c). If, again, he is indifferent he tosses a fair coin to determine the order of the shared ride.
\end{enumerate}

\begin{figure}
    \centering
    \includegraphics{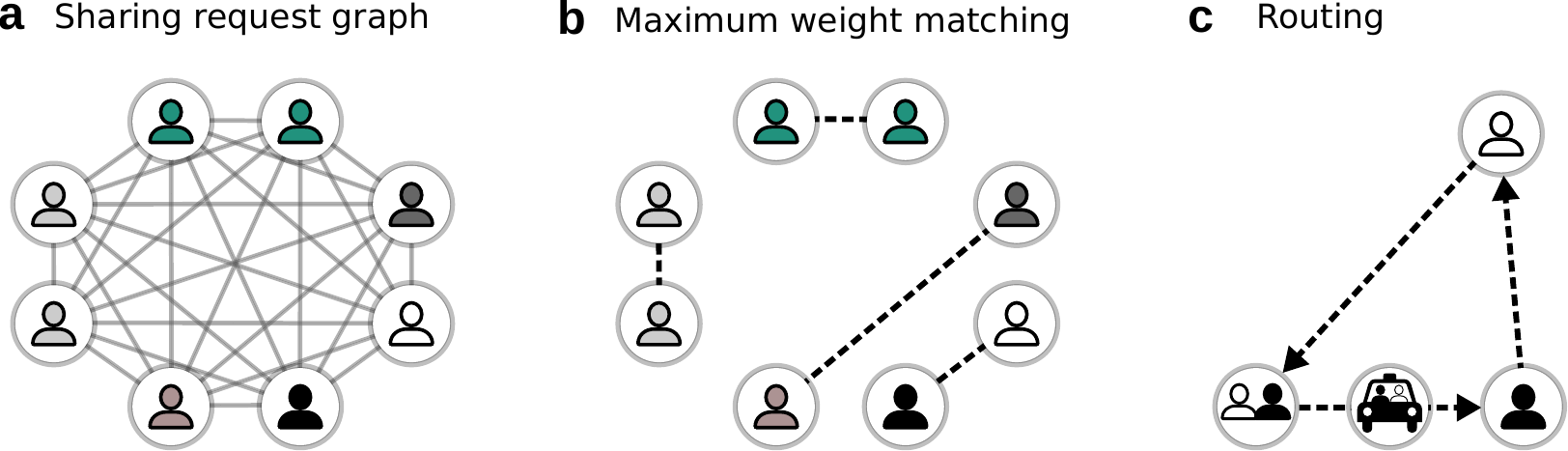}
    \caption{\textbf{Pairing rides is a maximum weight matching problem}. The provider's matching algorithm solves the maximum weight matching problem. \textbf{a} A shared ride request graph defines potentially shareable rides $\sigma_\mathrm{Share}$ with edge weights defining the saved distance of a combined ride. \textbf{b} The provider pairs requests to maximize his saved distance. \textbf{c} Per matching the provider defines the shared route to minimize the distance driven and customer inconvenience.}
    \label{fig:MatchingAlgorithm}
\end{figure}

While the adoption of ride-sharing in general depends on the underlying street network and the destination distribution, the problem simplifies in the limit of many concurrent users, $S \rightarrow \infty.$ In particular, a necessary and sufficient condition for full adoption of ride-sharing to be a stable equilibrium in this limit is that the financial incentives compensate the inconvenience (see Supplementary Fig.~\ref{fig:SFIG7}). In this limit and with full sharing, detours disappear as users will always be matched with other users with the same destination. Formally:

\newpage

\begin{theorem}[Full sharing in high-demand limit]
If $\lim_{S\to\infty} \pi(d)^*=1$ the ratio of inconvenience to financial incentive must be $\zeta/\epsilon<1$.
\end{theorem}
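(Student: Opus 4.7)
The plan is to derive the bound by pairing the replicator stability condition at $\pi(d)^{*}=1$ with an asymptotic analysis of the provider's matching as $S\to\infty$, and then to rule out the marginal case separately. The assumption is the hypothesis $\lim_{S\to\infty}\pi(d)^{*}=1$, and the target conclusion is the strict inequality $\zeta/\epsilon<1$.

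First I would argue that for the trajectory of Supplementary Eq.~\eqref{eqn:ReplicatorDynamics} starting from the interior initial condition $\pi(d,0)=0.01$ (see Methods) to reach the boundary equilibrium $\pi(d)^{*}=1$, the reproduction factor $r(d,t)$ must satisfy $r(d,t)>1$ whenever $\pi(d,t)<1$ sufficiently close to the limit. Using the explicit form
$r(d,t) = [u_\mathrm{single}(d)+E[\Delta u(d,t)]]/[u_\mathrm{single}(d)+\pi(d,t)E[\Delta u(d,t)]]$, this is equivalent to $E[\Delta u(d,t)]>0$ in a left neighborhood of $\pi=1$. Passing to the large-$S$ limit therefore reduces the claim to evaluating $\lim_{S\to\infty}E[\Delta u(d)^{*}]$ and showing that its non-negativity forces $\zeta<\epsilon$.

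Second, I would carry out the matching analysis. Under $\pi(d')\equiv 1$, as $S\to\infty$ the number of simultaneous requests at every destination diverges. In the maximum-weight matching defined in Supplementary Fig.~\ref{fig:MatchingAlgorithm}, a same-destination pairing carries the maximal per-edge savings $d_\mathrm{single}(d)$ with zero detour and full shared distance, while any cross-destination pairing yields strictly smaller savings. By a pigeonhole / law-of-large-numbers argument the optimal matching thus pairs all but $o(S)$ users with a same-destination partner, so $E[d_\mathrm{det}(d)]\to 0$ and $E[d_\mathrm{inc}(d)]\to d_\mathrm{single}(d)$. Substitution into Supplementary Eq.~\eqref{eqn:ReplicatorDynamics2} gives
\begin{equation*}
\lim_{S\to\infty}E[\Delta u(d)^{*}] \;=\; (\epsilon-\zeta)\,d_\mathrm{single}(d),
\end{equation*}
which together with the stability requirement from the first step yields $\epsilon\ge\zeta$. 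The equality case is then excluded directly: if $\zeta=\epsilon$ the limiting utility difference vanishes identically and $r(d,t)\to 1$ uniformly in $\pi(d,t)$, so the replicator map becomes the identity on the interior and no trajectory starting from $\pi(d,0)=0.01$ can approach $1$, contradicting the hypothesis and forcing $\zeta/\epsilon<1$ strictly.

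The main obstacle I expect is the matching step. Even though same-destination pairs are per-edge optimal, the maximum-weight matching is a global combinatorial object, and one has to control cross-branch matches, odd-parity artefacts of pairwise matchings when $S$ is odd, and local swaps that trade detour for inconvenience across destinations. I expect this to be handled by a pigeonhole bound on the number of non-same-destination edges in the optimal matching, combined with the boundedness of the network diameter used in Methods, so that the aggregate contribution of such edges is absorbed into an $o(1)\cdot d_\mathrm{single}(d)$ error. Everything else in the argument is essentially algebraic and relies only on the explicit form of the replicator factor together with the linear decomposition of $E[\Delta u(d,t)]$.
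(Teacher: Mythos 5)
Your proposal is correct and follows essentially the same route as the paper's proof: the full-sharing equilibrium forces $E[\Delta u(d)^*]\ge 0$, and in the $S\to\infty$ limit with $\pi\equiv 1$ the matching becomes same-destination so that $E[d_\mathrm{det}(d)]\to 0$ and $E[d_\mathrm{inc}(d)]\to d_\mathrm{single}(d)$, leaving $\lim_{S\to\infty}E[\Delta u(d)^*]=d_\mathrm{single}(d)\,\epsilon\,(1-\zeta/\epsilon)$ and hence $\zeta/\epsilon<1$. The extra scaffolding you add (the pigeonhole control of cross-destination edges and the separate exclusion of the marginal case $\zeta=\epsilon$) is more careful than the paper, which simply invokes strict positivity of the utility difference at a pure sharing equilibrium, but it does not change the underlying argument.
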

\begin{proof}
A dominant equilibrium strategy in sharing, $\pi(d)^*=1$, implies positive expected utility difference $E[\Delta u(d)^*]>0$. The limit of infinite request number yields
\begin{align}
    \lim_{S\to\infty} E[\Delta u(d)^*] =  d_\mathrm{single}(d)\epsilon \left(1-\lim_{S\to\infty}\frac{\xi}{\epsilon}\frac{E[d_\mathrm{det}(d)]}{d_\mathrm{single}(d)}-\lim_{S\to\infty} \frac{\zeta}{\epsilon}\frac{E[d_\mathrm{inc}(d)]}{d_\mathrm{single}(d)} \right) = d_\mathrm{single}(d)\epsilon \left(1- \frac{\zeta}{\epsilon} \right)
\end{align}
where we used that $\pi(d)^*=1$ for which $S\to\infty$ corresponds to zero-detour matching to destination $d$. Consequently, $E[d_\mathrm{inc}(d)]=d_\mathrm{single}(d)$ and
\begin{align}
    0<E[\Delta u(d)^*] = d_\mathrm{single}(d) \epsilon \left(1-\frac{\zeta}{\epsilon}\right)
\end{align}
which implies $\frac{\zeta}{\epsilon} < 1$.

\begin{figure}
    \centering
    \includegraphics{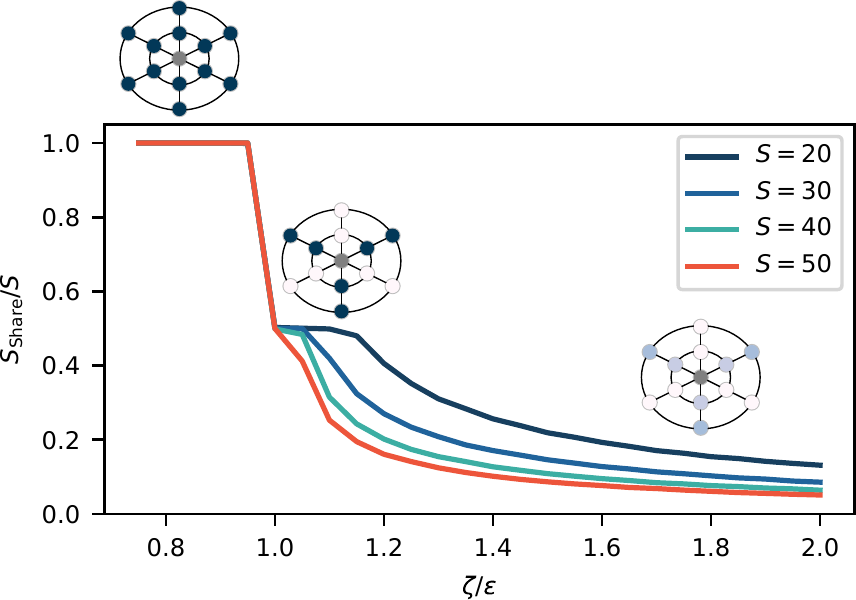}
    \caption{\textbf{Discontinuous phase transition in the ride-sharing adoption in the high-demand limit}. The control parameter $\zeta/\epsilon$ governs the ride-sharing adoption. For $\zeta/\epsilon <1$ the system finds itself in the high-sharing regime ($S_\mathrm{Share}/S=1$). At $\zeta_c/\epsilon_c=1$, a spatially heterogeneous pattern of ride-sharing adoption forms for finite $S$ and detour preferences $\xi>0$ where adjacent branches alternate between the low- and high-sharing regimes. For $\zeta/\epsilon>1$ ride-sharing adoption fades out along the branches that are still in the high-sharing regime. This hybrid state of ride-sharing adoption becomes smaller as $S$ increases. In the limit $S\to\infty$ the transition becomes discontinuous at $\zeta_c/\epsilon_c=1$.
    Simulation parameters: $\xi = 0.3, \epsilon = 0.2$.}
    \label{fig:SFIG7}
\end{figure}

\end{proof}

\subsubsection{The case $S=3$}

The case $S=3$ produces equilibrium adoption of ride-sharing qualitatively different than adjacent values of $S$, as discussed in Figure 3a in the the Main Manuscript. For sufficiently high $\epsilon$ the population has a dominant sharing strategy in this configuration which is induced by the fact that the service provider can at most pair two of the three ride requests into a shared one. The left-over request will enjoy the benefits of a single ride at discounted trip fare, inducing an incentive to become this request which fuels both the ride-sharing adoption as well as the matching probability. 

As $S$ increases beyond $S=3$ the incentive of gambling on being a left-over request reduces drastically as far less corresponding constellations exist. Thus, $S=3$ produces a behavior in Figure 3a of the Main Manuscript that looks qualitatively different than for other values of $S$.

\clearpage

\section{Supplementary Note 4. Robustness of ride-sharing adoption regimes}

In the Main Manuscript we demonstrated that the ride-sharing anti-coordination game reproduces opposing regimes of ride-sharing adoption in a simple setting. In this section we demonstrate the robustness of these results under different conditions, including non-homogeneous demand constellations and for different origin locations in the network, illustrating that the underlying mechanisms balancing incentives remain identical.\\

\subsection*{Ride-sharing adoption for non-homogeneous origin-destination demand}

Using the stylized city topology introduced in the Main Manuscript, we investigate the impact of radially and azimuthally asymmetric destination demand on the ride-sharing adoption from a joint origin. We distinguish between four scenarios representative for different types of urban settlements:
\begin{enumerate}
    \item \textit{Dense core}: Starting from a joint origin in the city center, a gradient of decreasing destination demand in radial direction mimics urban environments with densely populated city core. Further distance destinations (e.g. suburbs) are less often requested, e.g. because of sparser population density.
    \item \textit{Urban sprawl}: In situations where distant destinations from the city center make up the majority of ride requests the radial destination demand gradient is reversed. Theses scenarios represent constellations of urban sprawl, or situations where the city core is only sparely populated, e.g. because of high real-estate prices.
    \item \textit{Sparse settlement}: Urban environments may exhibit azimuthal gradients in destination demand starting from an origin in the city center, e.g. stretched out residential settlements that have formed next to existing road, river banks etc. In that case destination demands in radial direction might be similar, but differ significantly by cardinal direction.
    \item \textit{Heterogeneous settlement}: Urban constellations where both radial as well as azimuthal destination demand gradients exist might describe heterogeneously grown environments, e.g. because of natural obstacles or staged development.
\end{enumerate}

Supplementary Figs.~\ref{fig:RadialsODSharingPattern} and \ref{fig:AzimuthalODSharingPattern} correspond to the four scenarios. For given financial discount $\epsilon$ an increase in request rate $S$ gives rise to a spatially heterogeneous sharing/non-sharing pattern and decreasing overall adoption of ride-sharing in all scenarios, independent of the destination demand distributions (compare Fig.~3 in the Main Manuscript). 
As second order effects, the origin-destination distribution determines (i) whether the cardinal direction of the sharing pattern is random (Supplementary Fig.~\ref{fig:RadialsODSharingPattern} for radially asymmetric destination demand), or aligned with the highest destination demand (Supplementary Fig.~\ref{fig:AzimuthalODSharingPattern} for azimuthally asymmetric destination demand), and (ii) whether close-by or distant destinations reduce their willingness to share first upon increased request rate.
\begin{enumerate}
    \item \textit{Dense core}: For \textit{dense core} settings (see Supplementary Fig.~\ref{fig:RadialsODSharingPattern}a) the cardinal direction of the sharing/non-sharing pattern is solely driven by random fluctuations breaking the azimuthal symmetry. The destination demand gradient leads to a reduction of willingness to share from inside to outside as $S$ increases.
    
    \item \textit{Urban sprawl}: Phenomenologically, \textit{urban sprawl} (see Supplementary Fig.~\ref{fig:RadialsODSharingPattern}b) corresponds to \textit{dense core}, but this time increasing the request rate reduces the willingness to share from the outside (i.e. high destination demand).
    
    \item \textit{Sparse settlement}: In the presence of azimuthal destination demand gradients the sharing pattern forms along the branches of high demand (see Supplementary Fig.~\ref{fig:AzimuthalODSharingPattern}a). The dominance of those destinations in the replicator dynamics guides the symmetry breaking into letting low demand destinations reduce their willingness to share, which reduces the expected detour for sharing branches. As $S$ increases the willingness to share reduces from in- to outside as in the uniform case analyzed in the Main Manuscript.
    
    \item \textit{Heterogeneous settlement}: For heterogeneous settlements the combination of radial and azimuthal gradients in the destination demand orients the cardinal direction of the sharing pattern in line with high demand outside destinations (see Supplementary Fig.~\ref{fig:AzimuthalODSharingPattern}b). 
    Outside destinations have a higher utility gain of sharing (incentives proportional to distance), resulting in faster adjustments in the replicator dynamics and an effective first-mover advantage such that these destinations determine the spatial pattern in the partial sharing state.
\end{enumerate}
In all settings, the results are qualitatively the same as for homogeneous origin-destination demand (compare Fig.~3 in the Main Manuscript). Naturally, sufficiently high financial incentives overcome this partial-sharing phase and result in full sharing, reproducing the two phases of ride-sharing adoption (see Supplementary Fig.~\ref{fig:SFIG5}, compare Fig.~4 in the Main Manuscript).

\begin{figure}
    \centering
    \includegraphics[width=1.\linewidth]{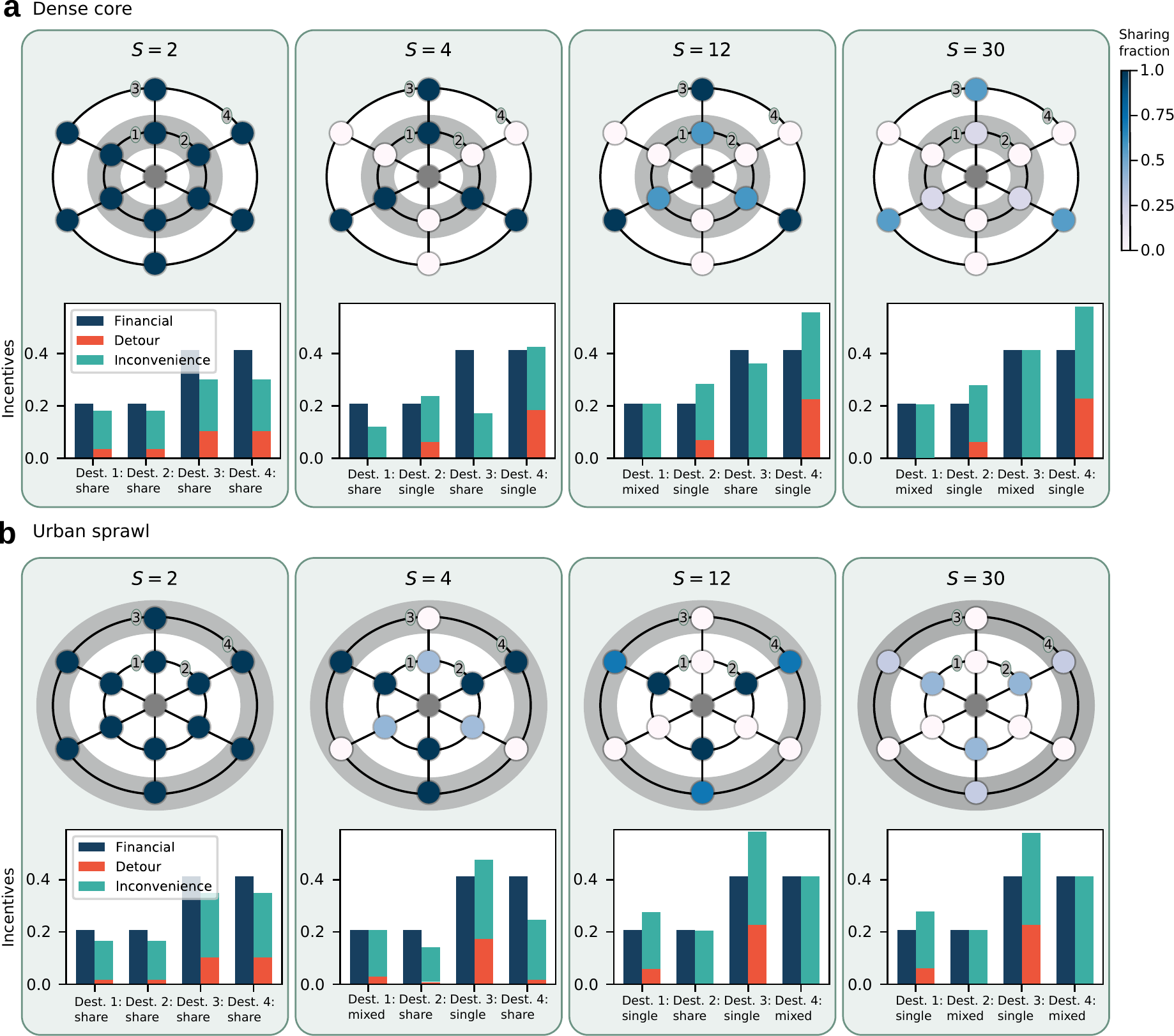}
    \caption{
    \textbf{Robustness of ride-sharing adoption for radially asymmetric origin-destination demand}. Radial asymmetry of the destination demand distribution does not qualitatively affect the equilibrium ride-sharing adoption. 
    \textbf{a} Dense core setting, inner ring destinations (gray shading) are visited twice as often as outer ring destinations. Increased request rate reduces the destination's ride-sharing adoption from inside to outside. 
    \textbf{b} Urban sprawl setting, outer ring destinations (gray shading) are visited twice as often as inner ring destinations. As $S$ increases the ride-sharing adoption ceases from outside to inside. 
    In both cases branches of high ride-sharing adoption emerge in a random direction (compare Fig.~3 in the Main Manuscript). Parameters: $\epsilon = 0.2, \zeta=0.3, \xi=0.3$.}
    \label{fig:RadialsODSharingPattern}
\end{figure}

\clearpage

\begin{figure}
    \centering
    \includegraphics[width=1.\linewidth]{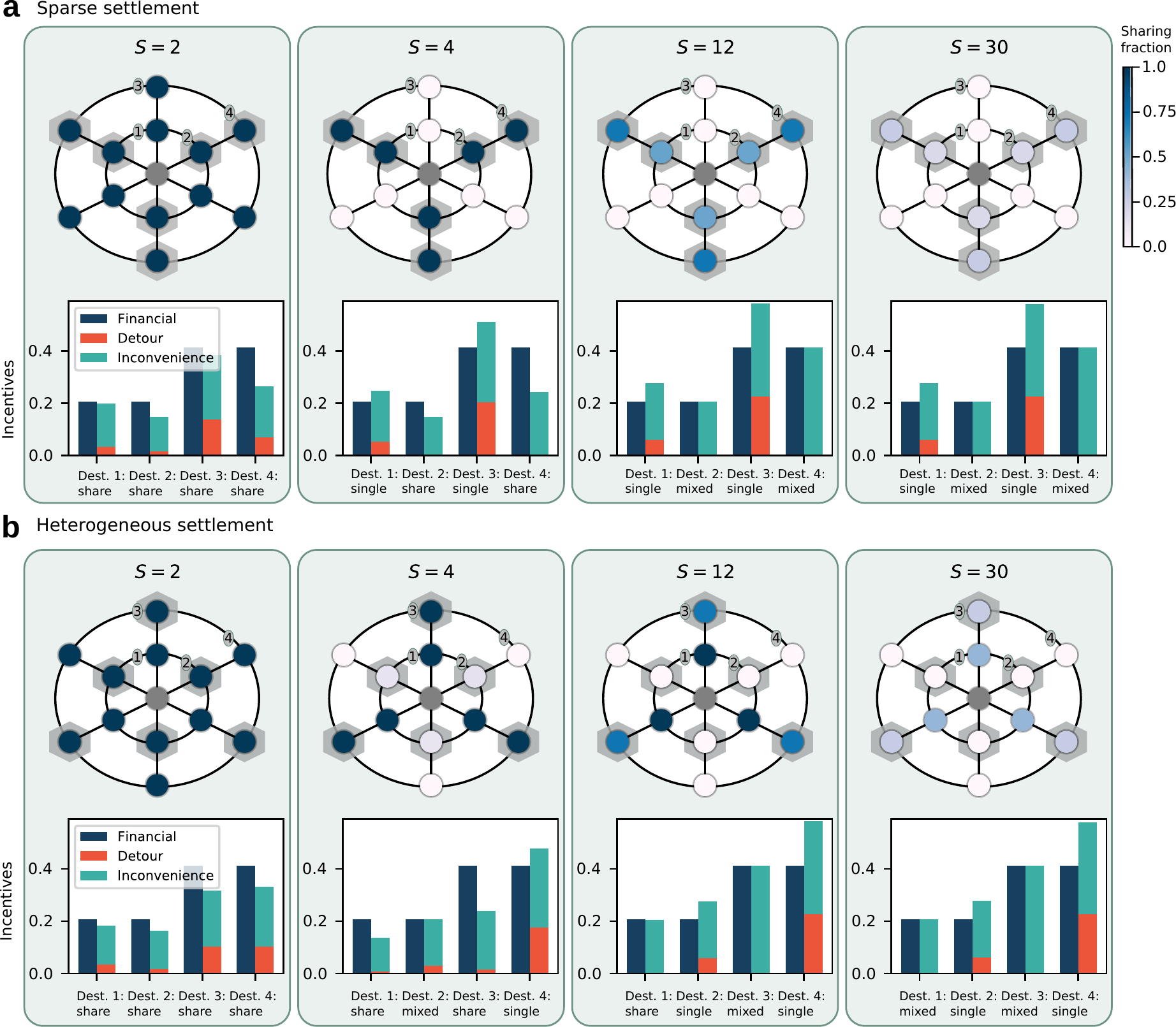}
    \caption{
    \textbf{Robustness of ride-sharing adoption for azimuthally asymmetric origin-destination demand}. 
    An azimuthally asymmetric destination demand distribution predetermines the emergence of sharing/non-sharing branches.
    \textbf{a} Sparse settlement setting, neighboring branches of destinations in radial direction alternate between being visited twice as likely (gray shading) as the other branches. The high-demand branches are sharing while the low demand branches quit sharing due to their high expected detour. Increased request rate reduces the destination's ride-sharing adoption from inside to outside. 
    \textbf{b} Heterogeneous settlement setting, inner and outer ring destination nodes on the same branch alternate between being requested twice as often (gray shading) as the other one. Also in this setting branches of high ride-sharing adoption emerge, driven by the outermost destinations. As $S$ increases the ride-sharing adoption ceases from outside to inside (compare Supplementary Fig.~\ref{fig:RadialsODSharingPattern}b). 
    Parameters: $\epsilon = 0.2, \zeta=0.3, \xi=0.3$.
    }
    \label{fig:AzimuthalODSharingPattern}
\end{figure}

\clearpage

\begin{figure}
    \centering
    \includegraphics{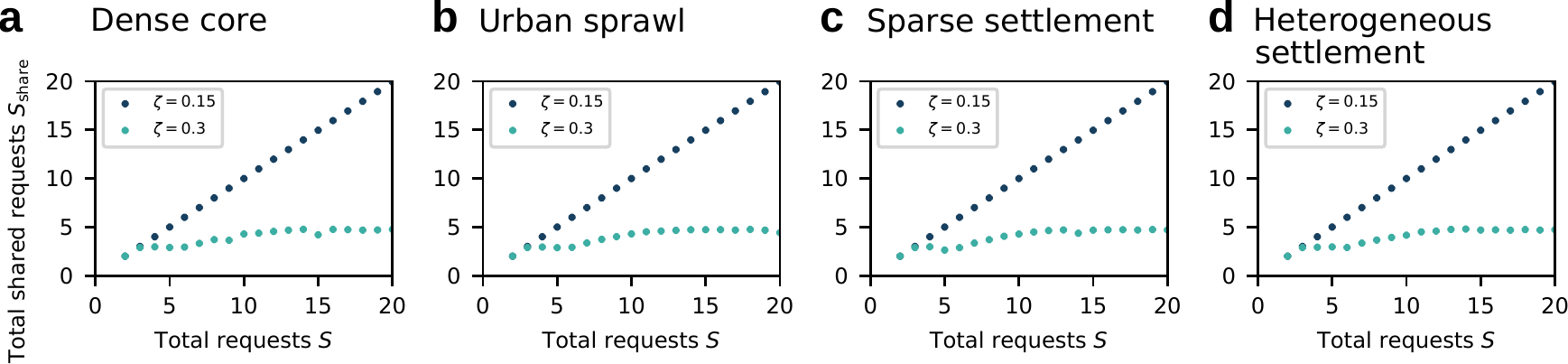}
    \caption{{\textbf{Phases of ride-sharing adoption persist for heterogeneous demand distributions}}. All settings of heterogeneous demand constellation (see Supplementary Note 4) reproduce the low-sharing phase ($S_\mathrm{Share} = \mathrm{const}$ [green]) and high-sharing phase ($S_\mathrm{Share} = S$ [blue]) illustrated in the Main Manuscript for financial incentives that (do not) compensate the inconvenience (compare Fig.~4b in the Main Manuscript).}
\label{fig:SFIG5}
\end{figure}

\clearpage
\subsection{Ride-sharing adoption for decentral origin}

In the one-to-many ride-sharing game, the relative position of the origin defines the scale of average distances to different destinations and the possible combinations in which requests for shared rides are matched. Here, we consider the stylized city topology introduced in the Main Manuscript with a decentral origin at the periphery. The decentral position of the origin breaks the radial symmetry of the city layout and creates a heterogeneous distribution of trip distances, reminiscent of real road networks. Consequently, it impacts expected detours and inconvenience. We demonstrate the robustness of the different phases of ride-sharing adoption even under these conditions.

\begin{figure}[h]
    \centering
    \includegraphics{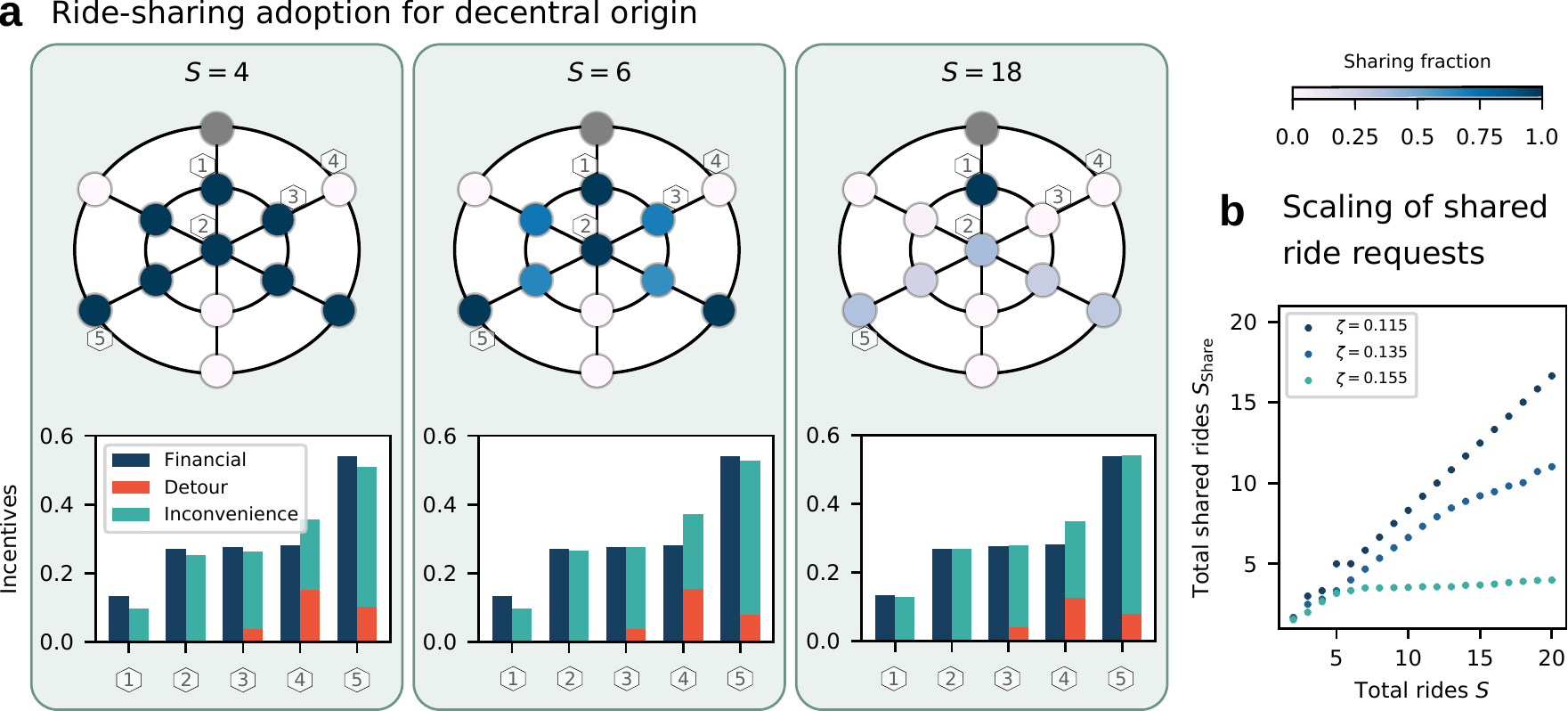}
    \caption{\textbf{Robustness of ride-sharing adoption for decentral origin}. A decentral origin node (gray) defines a new scale of average distances to the destinations in the stylized city topology, but the ride-sharing adoption remains qualitatively the same. \textbf{a} In the partial sharing regime a spatially heterogeneous pattern emerges in the ride-sharing adoption across origins, and fades out as the number of rides $S$ increases (Parameters: $\epsilon = 0.135, \zeta = 0.155, \xi = 0.3$). \textbf{b} If financial incentives $\epsilon$ overcompensate inconvenience effects $\zeta$ we recover a full-sharing phase (Parameters: $\epsilon=0.135, \xi = 0.3$).}
    \label{fig:SFIG6}
\end{figure}

Supplementary Figure~\ref{fig:SFIG6} illustrates a one-to-many situation of homogeneous transportation demand from the northernmost node in the stylized city topology. Again, a distinct spatial pattern of ride-sharing adoption emerges in the partial sharing regime (Supplementary Fig.~\ref{fig:SFIG6}a). When the financial incentives are sufficiently large, we recover the full sharing phase (Supplementary Fig.~\ref{fig:SFIG6}b). In this setting, the sharing pattern is symmetric about the north-to-south axis, where nodes in the direction of the city center share dominantly. They have no expected detours since in all constellations where they are matched, they will be dropped first. This is not the case anymore for destinations on the opposite side of the city center. Hence, these destinations do not share. For the remaining destinations the decision to share, or not, results in a zero-sum game very soon as $S$ increases (compare Supplementary Fig.~\ref{fig:SFIG6}a, center panel) and eventually reproduces a ride-sharing adoption pattern where neighboring branches alternate between sharing and not sharing (compare Fig.~3 in Main Manuscript). 

Again, ride-sharing adoption behaves qualitatively similar compared to the constellation for central origins.

\newpage

\subsection{Ride-sharing adoption under imperfect information}

Under realistic settings ride-hailing users may not be able to perfectly quantify or monitor the utility of shared rides. For example, users cannot exhaustively sample all demand configurations and sharing decisions by other users and may update their sharing decisions based on limited observations from, say, their past week of usage. Additionally, external shocks may modify the realized utility per time step, or individuals may have different utility perception, even though they travel from the same origin to the same destination. In sum, ride-hailing users may exhibit utility fluctuations and may make imperfect decisions based on these. We demonstrate the robustness of our results with respect to uncertainty and fluctuations by adding stochastic noise to the realized utility of shared rides.

Taking $X\sim\mathcal{N}(0,\sigma^2)$ to be a normal random variable with mean zero and variance $\sigma^2$, we realize the deterministic utility increment $\Delta u$ and multiply it by a factor $(1+x_{i,n,t})$ where $x_{i,n,t}$ denotes the (independent and identically distributed) realization of random variable $X$ for rider $i$ in game $n$ at time $t$. The standard deviation $\sigma$ acts as a control parameter for how different an individual might perceive the utility of a shared ride, or how strong external stochastic influences are. Supplementary Figure \ref{fig:SFIG_UtilityFluct} illustrates the evolution and average equilibrium ride-sharing adoption for different destination nodes in the stylized street network topology for various degrees of imperfect information $\sigma$. 

The ride-sharing adoption evolves based on the estimated expected utility from $100$ realizations per replicator step [see Supplementary Fig.~\ref{fig:SFIG_UtilityFluct} (bottom)], e.g. the experience of a group of similar users over a week. As $\sigma$ increases, the trajectories become subject to stronger fluctuations over time, but the average equilibrium results remain unaltered even for large values of $\sigma$ [compare Supplementary Fig.~\ref{fig:SFIG_UtilityFluct} (top)]. Hence, users naturally filter out large variances even though they may not arrive at a perfect estimate of their expected utility. 

\begin{figure}[h]
    \centering
    \includegraphics{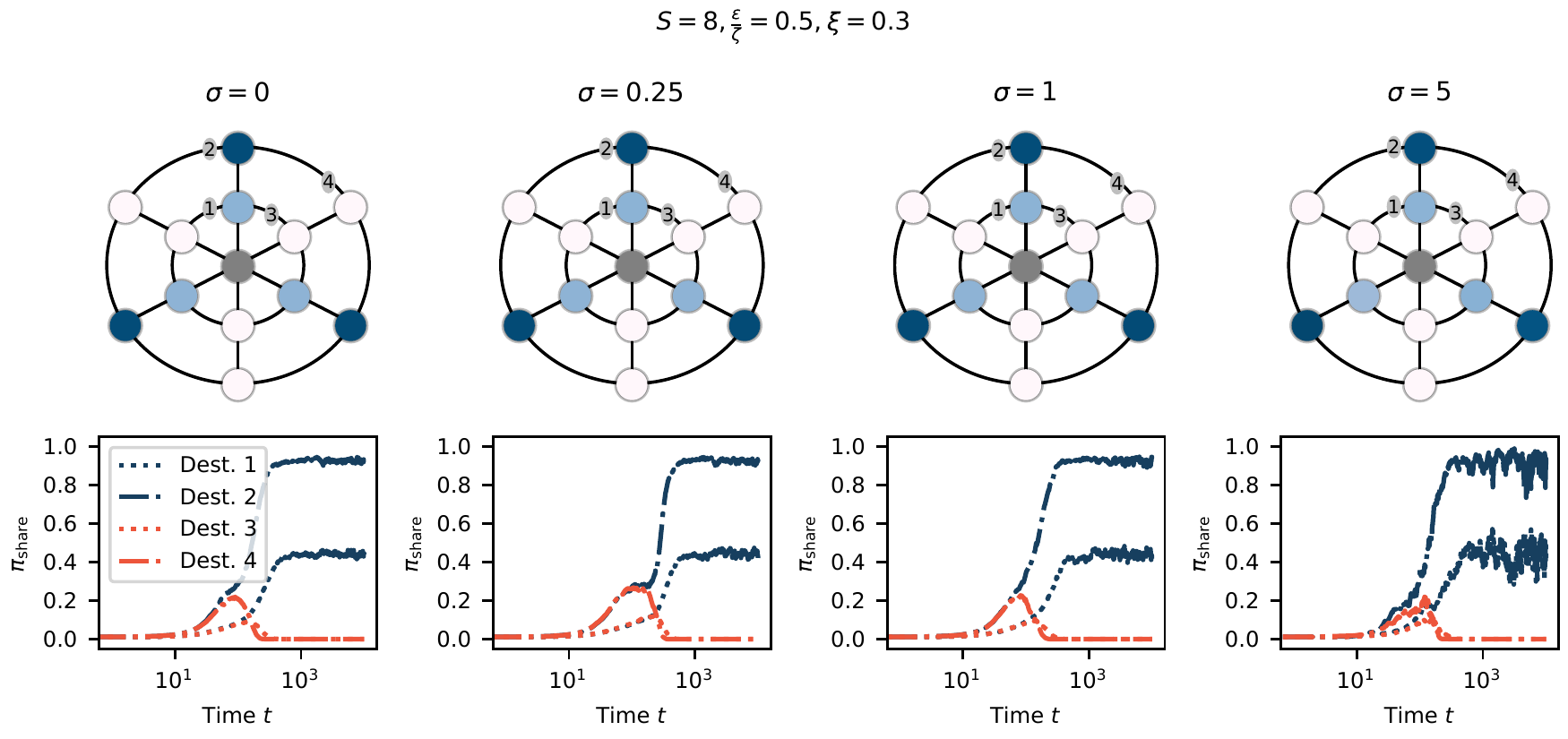}
    \caption{\textbf{Robustness of ride-sharing adoption under utility fluctuations, reflecting imperfect information and heterogeneous utility perception}. (Top) Average equilibrium ride-sharing adoption under fluctuations in the perceived utility of shared rides $\Delta u$, multiplied by a stochastic factor governed by a standard normal distribution with standard deviation $\sigma$. (Bottom) Trajectories of the replicator dynamics under stochastic utility ($u_\mathrm{single}=3$). Persistently, a dynamic symmetry breaking process leads to adoption of opposing ride-sharing strategies along adjacent destination nodes in cardinal direction.}
    \label{fig:SFIG_UtilityFluct}
\end{figure}

\newpage
\subsection{Ride-sharing adoption in systems with heterogeneous convenience preferences}

In real systems preferences of players participating in the ride-sharing game might be heterogeneous, impacting the macroscopic state of ride-sharing adoption. Here, we consider heterogeneous convenience preferences $\zeta$ among the $S$ people starting their ride-hailing or -sharing trips from a common origin, and demonstrate the robustness of the different phases of ride-sharing adoption.

Given fixed financial discounts $\epsilon= 0.2$, players belong to one of three convenience types $\zeta\in\{0.175,0.225,0.275\}$. Per game play the composition of the $S$ players is drawn randomly from a distribution with Pr$[\zeta_i]=p_i,\ i\in\{1,2,3\}$ where $p_1+p_2+p_3=1$. Note that the distribution of convenience preferences tunes the average convenience preferences $E[\zeta]=\zeta_1p_1+\zeta_2p_2+\zeta_3p_3$.

Supplementary Figure~\ref{fig:HetZeta_SShare} (left) illustrates the sharing adoption under increasing demand for balanced distribution of convenience preferences. At low demand $S$ all players find themselves in the high-sharing state to maximize their personal utility ($S_\mathrm{Share}=S$). As $S$ increases further, $S_\mathrm{Share}=\alpha S$ with $\alpha <1$, hinting at a partial-sharing regime. In fact, $\alpha$ approaches a value of $1-$Pr[$\zeta_3]=1-p_3$ when increasing $S$. The player type with the highest convenience preferences ($\zeta_3$) quits sharing first as financial discounts cannot compensate their expected inconvenience. Hence, the system may appear to be in a partial sharing state macroscopically, but finds itself in a hybrid state of high- and low-sharing adoption microscopically. For even higher demand, the second player type ($\zeta_2$) will also quit sharing, explaining why $S_\mathrm{Share}\approx (1-p_2-p_3)S$ for very high demand $S$. For the first player type ($\zeta_1$) financial discounts will always overcompensate expected inconvenience effects. Thus, these players adopt a dominant sharing strategy and $S_\mathrm{Share}\approx p_1 S$ in the high demand limit.

In constellations where $\zeta/\epsilon > 1$ for the entire population (i.e. $p_1=0$, see Supplementary Fig.~\ref{fig:HetZeta_SShare} (right)) the system approaches the low-sharing state with constant $S_\mathrm{Share}$ if $S$ is sufficiently large. 

Consequently, heterogeneous distributions of convenience preferences with proportions of the population exceeding $\zeta/\epsilon>1$ may explain an effective sublinear trend of $S_\mathrm{Share}$ as demand $S$ increases.

\begin{figure}[h]
    \centering
    \includegraphics{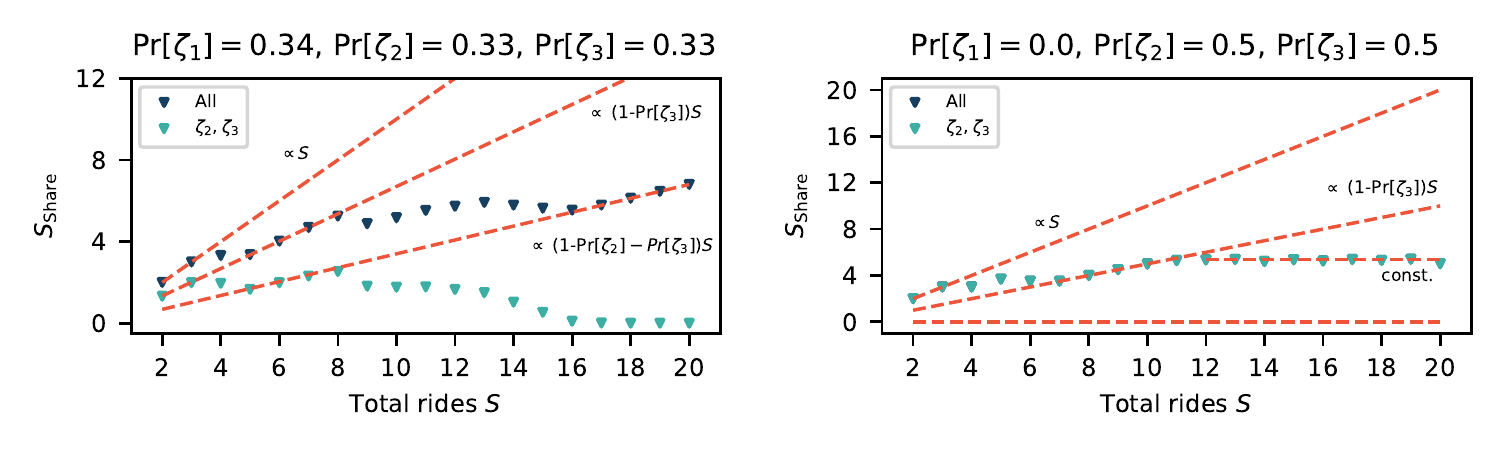}
    \caption{\textbf{Heterogeneity in convenience preferences leads to superposition of low-, partial and high-sharing regimes}. (Left) Uniform distribution of convenience preferences $\zeta_i\in\{0.175,0.225,0.275\}$ transitions from high-sharing state ($S=2-3$) to hybrid state in low- and high ride-sharing adoption ($S>16$) as overall demand $S$ increases. For given financial discount $\epsilon=0.2$ only users with preferences $\epsilon/\zeta_1\approx 1.14$ adopt sharing in the high-demand limit, leading to $S_\mathrm{Share}\approx \mathrm{Pr}[\zeta_1] S$ in the high demand limit. (Right) If $\epsilon/\zeta_i<1$ for all user groups, the system asymptotically approaches the low-sharing regime with a constant number of shared ride requests at high demand.}
    \label{fig:HetZeta_SShare}
\end{figure}

\newpage
\subsection{Superposition of origins with heterogeneous convenience preferences}

In a strongly simplified modeling approach, the aggregate state of ride-sharing in real cities may be imagined as a superposition of different origins with characteristic demand $S$ for rides, and potentially heterogeneous user types with regards to their convenience preferences. For simplicity we neglect heterogeneities in destination demand per origin and topological differences in average trip distances by origin.

Based on the equilibrium results of the ride-sharing game at a single origin with heterogeneous convenience preferences explained in the previous section, we superimpose 600 of them in Supplementary Figure \ref{fig:SFIG_Heteroskedasticity}. Per origin we draw its characteristic demand $S\in[2,20]$ from an exponential probability distribution Pr[$S]\propto e^{-\lambda S},\ \lambda >0$, corresponding to few high demand locations (e.g. airports, downtown) and a large number or low-demand locations (e.g. suburbs). 

For given inconvenience parameters $(\zeta_1=0.175,\zeta_2=0.225,\zeta_3=0.275)$ we draw the distribution $(p_1=\mathrm{Pr}[\zeta_1],p_2=\mathrm{Pr}[\zeta_2],p_3=\mathrm{Pr}[\zeta_3])$ per origin from a normal distribution with mean $\epsilon/(p_1\zeta_1+p_2\zeta_2+p_3\zeta_3) = 1.05$ and standard deviation $0.085$. Note that while, macroscopically, the system appears to be in a full-sharing state \textit{on average}, selected zones may exhibit $\zeta$-distributions for which financial discounts do not compensate inconvenience effects ($\epsilon/\zeta<1$). As explained in the previous section, user types with $\zeta>\epsilon$ may quit sharing at sufficiently high demand $S$.

\begin{figure}[h]
    \centering
    \includegraphics{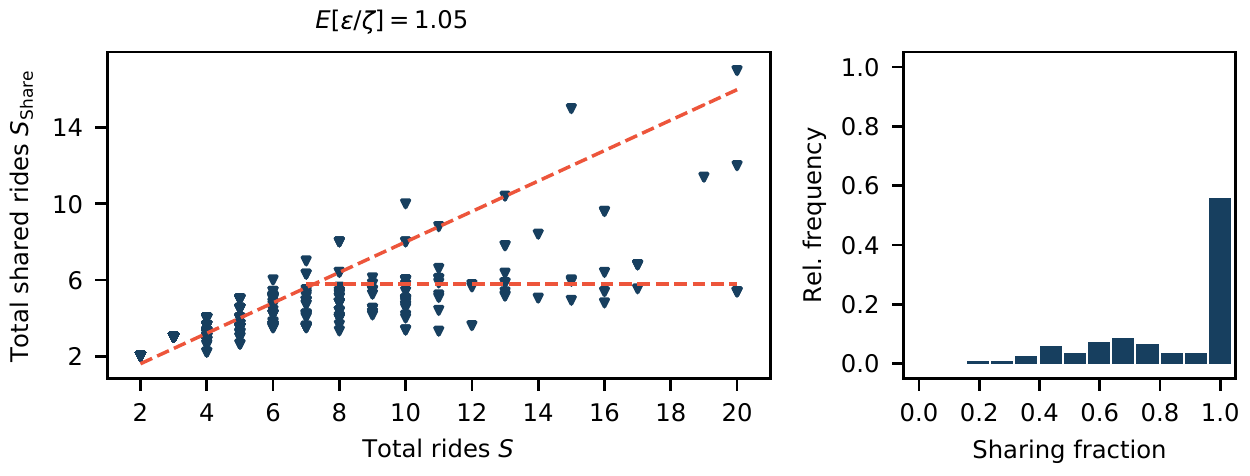}
    \caption{\textbf{Hybrid states of low- and high-sharing adoption from superposition of origins with heterogeneous convenience preferences}. Superimposing origin zones with different distributions of inconvenience preferences can macroscopically give rise to hybrid states of low- and high-sharing adoption at the same time, despite the total average of all zones $E[\epsilon/\zeta]=1.05$ hints at a full-sharing regime. Origin zones for which the vast majority of the local population has higher convenience preferences than current financial incentives accumulate on the horizontal low-sharing branch.}
    \label{fig:SFIG_Heteroskedasticity}
\end{figure}

Supplementary Figure \ref{fig:SFIG_Heteroskedasticity} illustrates the macroscopic state of sharing adoption as a function of demand for 600 superimposed origin zones. Clearly, the system appears to be in a hybrid state of ride-sharing adoption, despite macroscopically the average of financial to inconvenience preferences $E[\epsilon/\zeta]=1.05$ suggest full sharing adoption (but only less than 60\% adopt sharing unconditionally). Due to locally heterogeneous distributions of convenience preferences certain parts of the local population will eventually quit sharing at high demand, pushing those origins into the partial or low-sharing state.

Origins for which the majority of the population has higher convenience preferences than current financial incentives accumulate on the horizontal branch in Supplementary Figure \ref{fig:SFIG_Heteroskedasticity}. The remaining origins for which $p_1\gg p_2+p_3$, form the positive slope branch of $S_\mathrm{Share}$ with increasing demand.

Note that at low demand $S_\mathrm{Share}$ increases with demand $S$ for almost all origins with a wide variety of different slopes, defining a cone-like structure. The opening angle of the cone reflects the variance in $p_3=\mathrm{Pr}[\zeta_3]$-values across the sample, which transition away from the high-sharing state first under increasing demand. 

The resulting macroscopic city view yields a qualitatively similar picture as observed in empirical data for New York City and Chicago (compare Fig.~5 in the Main Manuscript).

\clearpage 

\subsection{Ride-sharing adoption under alternative matching strategies}

Ride-sharing service providers may adopt different matching strategies for shared rides depending on their business rationale. Besides minimizing the total distance driven (compare Main Manuscript and Supplementary Note 3), further strategies may include:
\begin{itemize}
    \item[(i)] \textit{First-in, first-out matching}: The provider matches requests as they come in, irrespective of direction or distance saved, to provide offer the customer a ride as quick as possible, and achieve high vehicle utilization. This strategy corresponds to an unweighted matching problem (see Supplementary Fig.~\ref{fig:MatchingAlgorithm}), and may be interpreted as a real-world scenario where the provider applies a first-in, first-out matching strategy without batch-processing. 
    \item[(ii)] \textit{Batch-processing}: As in the main manuscript, the provider batch-processes all $S$ requests within a specific time-window to pair individuals to optimize platform profits. However, unlike in the main manuscript the provider always matches shared ride requests independent of a distance savings potential. This matching scheme is similar to the one introduced in the Main Manuscript and detailed in Supplementary Note 3, but without the distance constraint on potential matches.
\end{itemize}
Supplementary Figure \ref{fig:FIG_MatchingAlgorithm} demonstrates the robustness of the low-, partial-, and high-sharing regimes of ride-sharing adoption under these alternative matching strategies. For financial discounts insufficient to compensate inconvenience disutilities the number of shared rides becomes constant as the total demand increases (Supplementary Fig.~\ref{fig:FIG_MatchingAlgorithm}, green). Only if financial discounts overcompensate expected detour and inconvenience effects,
\begin{align}
         \epsilon\, d_\mathrm{single}  > \zeta \, E[d_\mathrm{inc}|\mathrm{Share}] + \xi \, E[d_\mathrm{det}|\mathrm{Share}],
\end{align}
such that $E[\Delta u] > 0$ we observe the full-sharing regime where $S_\mathrm{Share}= S$ (Supplementary Fig.~\ref{fig:FIG_MatchingAlgorithm}, blue). In contrast to the distance minimizing matching scheme discussed in the Main Manuscript, the detour preference of customers remains important as long detours may persist, $E[d_\mathrm{det}|\mathrm{Share}] > 0$, despite better matching options in the high demand limit. The partial-sharing regime (Supplementary Fig.~\ref{fig:FIG_MatchingAlgorithm}, orange) separates the other two regimes. These results qualitatively reproduce Fig.~4 from the Main Manuscript.

\begin{figure}[h]
    \centering
    \includegraphics[width=\linewidth]{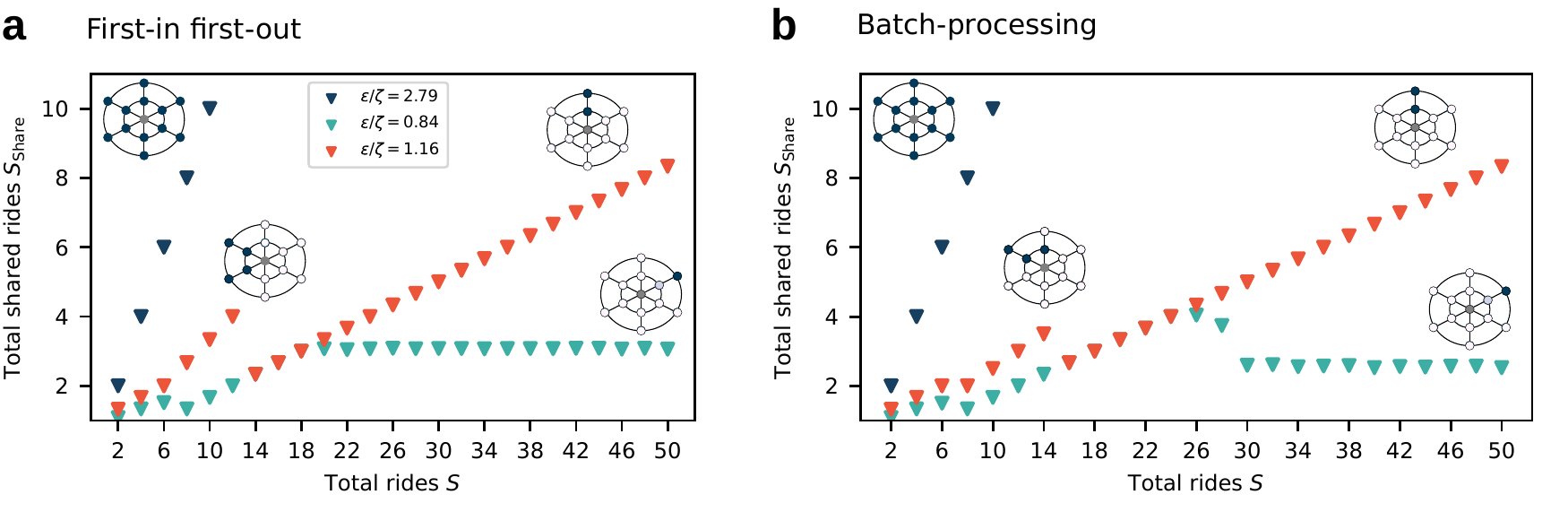}
    \caption{\textbf{Robustness of regimes of ride-sharing adoption for alternative matching strategies}. Matching shared rides independent of distance savings potential compared to single rides reproduces low- (green), partial- (orange) and high-sharing (blue) regimes of ride-sharing adoption. The orientation of spatial ride-sharing patterns results from symmetry breaking, reflecting how ride-sharing users try to minimize expected detour and inconvenience disutilities. \textbf{a} Unweighted matching based on a first-in, first-out principle (pairing of subsequent rides, independent of destination).  \textbf{b} Batch-based matching to optimize platform profits (pairing rides across all destinations, but aiming to maximize platform profits).}
    \label{fig:FIG_MatchingAlgorithm}
\end{figure}

While the spatial patterns of sharing adoption change slightly under alternative matching strategies (see Supplementary Fig.~\ref{fig:FIG_MatchingAlgorithm}), the underlying mechanism leading to the emergence of disparate regimes of ride-sharing adoption robustly persists. Users seek to avoid expected detours and inconvenience effects from ride-sharing. To minimize these disutilities only close-by destination nodes adopt sharing strategies with a spatial extension depending on the financial incentives. A symmetry breaking process selects the sharing destinations (compare insets in Supplementary  Fig.~\ref{fig:FIG_MatchingAlgorithm}). For matching strategies not aimed at minimizing distance driven only one cardinal sharing direction persists under increasing demand, as alternative settings could potentially lead to long detours. Individuals sharing rides in this cardinal direction manage to completely avoid detours. If financial incentives are sufficiently high to compensate for the remaining inconvenience, the partial-sharing regime persists under increasing demand (Supplementary  Fig.~\ref{fig:FIG_MatchingAlgorithm}, orange). If financial incentives are insufficient, the sharing adoption gradually fades out, resulting in the saturation of the number of shared rides (Supplementary Fig.~\ref{fig:FIG_MatchingAlgorithm}, green).

\clearpage
\section{Supplementary Methods}

In this section of the Supplementary Information we provide detailed insight into the data used, cleansing procedures applied and simulation methods implemented.

\subsection{Data acquisition, structure, and treatment}

\paragraph{Data sources.} The New York City Taxi \& Limousine Commission (TLC) publishes trip records for high-volume for-hire vehicles (HVFHV) on a monthly basis. The data includes trip information on pickup time, origin zone, drop-off time, destination zone as well as a shared ride request label for providers completing more than 10000 trips per day \cite{data_NYC}. Our analysis is based on the aggregate HVFHV activity between January and December 2019, independent of service provider, including more than 250 million total trips. We exclude older data due to regulatory changes effective in 2019 \cite{TLC2018}, potentially impacting ride-hailing behavior, and data from 2020 due to changed transportation service activity in the course of the COVID-19 pandemic \cite{WHO2020}.

TLC partitions New York City into 265 taxi zones and provides geospatial information about zone boundaries, names and jurisdictions \cite{data_NYC}. We adopt the definition of these zones in all of our analyses.

Additionally, the City of Chicago publishes ride-hailing trip records on its Open Data Portal \cite{data_Chicago}. The data contains, amongst others, information about trip origin, destination, pickup and dropoff times as well as information whether a shared ride has been authorized by the requester. Our analysis encompasses the time-span between January and December 2019, as chosen for New York City, and includes more than 110 million trip requests served by three transportation service providers (Uber, Lyft, Via).

In our geospatial analysis we restrict ourselves to Chicago's 77 community areas, as well as trips leaving or entering the official city borders.\\

\paragraph{Data preparation.} We use TLC's data as-is. Our data cleansing procedure removes trip records for which trip information is decoded as not available. Furthermore, we omit trip records for zones 264 and 265 in our analysis. While the dataset contains trip requests labeled by these zones, there is no geographic decoding specified by TLC, nor do the zones have names.

Similarly, we use the Chicago trip records as-is.

For our analyses, we determine the total flux matrix specified in Supplementary Eqn.~\eqref{eqn:FluxMatrix} per city. When showing daily averages we normalize the total annual flux between origin and destination zones $o$ and $d$ by 16 hour days to obtain a per-minute-request rate, assuming hardly any request activity for 8 hours per day. In case of specifically defined time windows (see Supplementary Note 1 and 2), we normalize the total flux by the window size. 

We compute the fraction of shared rides as specified in Supplementary Eqn.~\eqref{eqn:FractionSharedRides} using a cutoff threshold of 100 rides for the aggregated records in 2019 to avoid strong statistical fluctuations. Given the total number of hundreds of millions of rides per year in the cities considered in this manuscript, origin-destination zone pairs with less than 100 rides per year typically correspond to trips to destinations with low population density. For example, in New York City in Fig. 1 in the Main Manuscript rides from 'East Village' to 'Governor's Island/Ellis Island/Liberty Island', 'Crotona Park' (a public park in South Bronx), 'Jamaica Bay' (a bay of marshy islands), 'Marine Park/Floyd Bennett Field' (a former airfield), 'Rikers Island' (a jail complex) or 'Saint Michaels Cemetery/Woodside' (a cemetery in Queens) are requested less often than 100 times per year and thus grayed out.

Note that these rides are still included in Figures and analyses illustrating the overall sharing adoption (compare Figs. 5a,b in the Main Manuscript), and only excluded when computing detailed origin-destination resolved sharing fractions (as displayed in Figs. 1, or 5c-f). Figs. 5a,b hint at bifurcation points of approx. 5.5 rides/min in New York City and 2.75 rides/min in Chicago, corresponding to roughly 2 million and 1 million annual ride requests, respectively. Hence, the choice of threshold does not impact the qualitative results up to threshold choice of several hundred thousand rides per year.

\newpage

\subsection{Numerical simulations of ride-sharing anti-coordination games}

\textit{Equilibration}. In this article, we focus on the equilibrium properties of the replicator dynamics underlying the ride-sharing game on networks. To equilibrate the system we evolve Supplementary Eqn.~\eqref{eqn:ReplicatorDynamics} using a value of $u_\mathrm{single}=4$ unless stated otherwise. 

The characteristic timescale of equilibration in the replicator dynamics depends not only on the choice of preference parameters, but also on the order of magnitude of the riders' utilities determined by the value of $u_\mathrm{single}$. The order of magnitude in utility effectively controls the adaptation of $\pi(d,t)$ between successive time steps $t$ and $t+1$, and may be interpreted as the Euler step in the continuous-time formulation of the replicator dynamics. As shown in Supplementary Fig.~\ref{fig:SFIG_uSingle} the qualitative results of the model are robust across a broad range of utility parameters. Hence, the timescale of equilibration may be fine-tuned by appropriate choice of $u_\mathrm{single}$ without altering the equilibrium outcomes. 

\begin{figure}[h]
    \centering
    \includegraphics{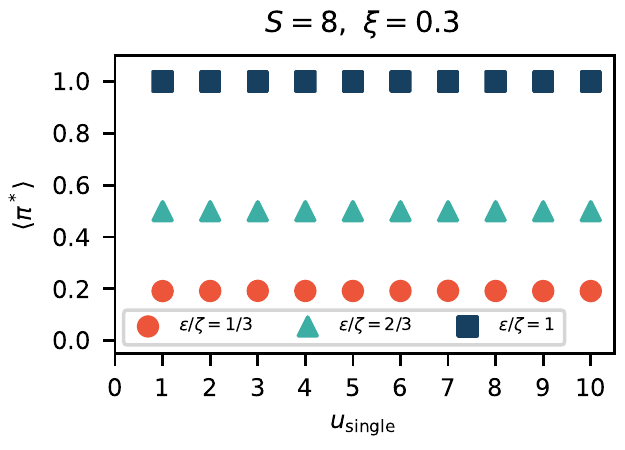}
    \caption{
    \textbf{Robustness of model results across a broad range of utility values of $u_\mathrm{single}$ in all regimes of ride-sharing adoption.} Parameter values for $u_\mathrm{single}$ govern the timescale of equilibration the adaptation between successive steps in the replicator dynamics of the ride-sharing game. The equilibrium outcomes remain unchanged irrespective of phase of ride-sharing adoption across a broad range of values for $u_\mathrm{single}$.
    }
    \label{fig:SFIG_uSingle}
\end{figure}

\newpage

We evolve the system for 20000 replicator time steps, discard a transient of 19000 time steps, quantify the degree of fluctuations around the average of the sharing adoption per destination for the remaining 1000 time steps and test whether it exceeds a threshold of 2 percentage points, indicating trends and a yet unequilibrated system. If the threshold is not met, we continue to evolve the dynamics for an additional 5000 replicator time steps and repeat the procedure until the average sharing adoption per destination becomes stationary.

Per replicator time step and per destination node we repeat the ride-sharing game for 100 times in the current configuration of $\pi(d,t)$ to generate a reliable numerical estimate for the expected utility increment of sharing $E[\Delta u(d,t)]$ that is being used to update $\pi(d,t+1)$. Consequently, we repeat the game at least two million times per parameter constellation $(S,\epsilon,\zeta,\xi)$ to obtain an estimate for the equilibrium ride-sharing adoption.

Supplementary Figure \ref{fig:FIG_Equilibration} illustrates the equilibration approach and quantifies the degree of fluctuations (due to imperfect estimation of the expected utilities) around the estimated stationary values of for the trajectories underlying the phase diagram from Fig.~4a in the Main Manuscript. Supplementary Fig. \ref{fig:FIG_Equilibration}b shows the standard deviation $\sigma_{\langle \pi^*\rangle}$ of the sharing fraction over time along the part of the trajectory used to estimate the equilibrium sharing fraction. For none of the $(S,\epsilon,\zeta,\xi)$-tuples the sharing fraction fluctuates by more than 1 percentage point around the estimated equilibrium value, hinting at a stationary equilibrium state. We observe that the vast majority of trajectories is much better equilibrated with negligible deviations from the estimated value $\langle \pi^*\rangle$ over time. Only in the low-sharing regime (mixed strategies requiring an exact balance of the incentives such that $E[\Delta u(d,t)] = 0$) and in the vicinity of the transition towards the high-sharing regime stronger fluctuations emerge, as expected.

In Supplementary Fig.~\ref{fig:FIG_Equilibration}c,d we show an exemplary trajectory for the $(S,\epsilon,\zeta,\xi)$-parameter tuple with highest degree of fluctuations around the equilibrium value, but satisfying the equilibration threshold. All other points reach their equilibrium values (up to the allowed degree of fluctuations) on shorter timescales.\\

\begin{figure}[h]
    \centering
    \includegraphics[width=1.0\linewidth]{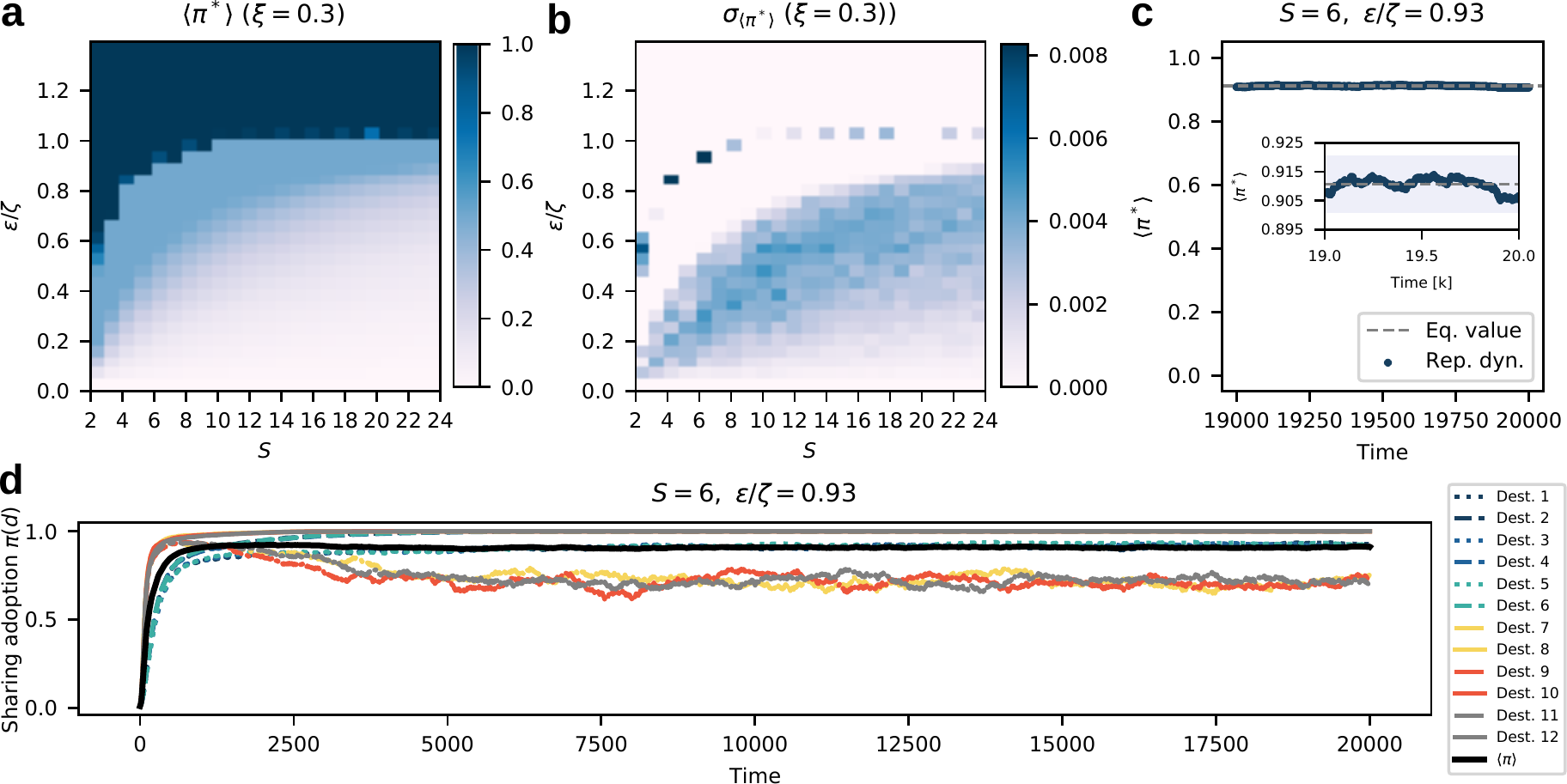}
    \caption{
    \textbf{Equilibration of replicator dynamics in the ride-sharing (anti-)coordination game.} \textbf{a} Phase diagram of equilibrium sharing fraction. \textbf{b} Temporal fluctuations in sharing fraction in the equilibrated state. \textbf{c} Time evolution of sharing fraction for the parameter constellation with longest equilibration timescale. Inset shows the part of trajectory used to determine value of stationary point (gray, dashed). Shading illustrates 1 percentage point fluctuation corridor. \textbf{d} Transient dynamics of the replicator dynamics corresponding to panel \textbf{c}.
    }
    \label{fig:FIG_Equilibration}
\end{figure}

\textit{Matching}. After generating the shared ride request graph (see Supplementary Note 3) we implement Edmond's Blossom algorithm to determine a maximum weight matching \cite{Kolmogorov2009}. Since the algorithm used implements a minimum cost perfect matching, we reduce our non-perfect matching problem to a perfect one as described in \cite[Ch. 1.5.1]{Schaefer2000}.

\end{document}